\newcommand{\stkout}[1]{\ifmmode\text{\sout{\ensuremath{#1}}}\else\sout{#1}\fi}
\DeclareMathOperator*{\arginf}{arg\,inf}
\newcommand{\bfsym}[1]{\ensuremath{\boldsymbol{#1}}}
\def\bbeta{\bfsym \beta}
\def\btheta{\bfsym \theta}
\def\bgamma{\bfsym \gamma}
\def\btau{\bfsym {\tau}}
\def\bbeta{\boldsymbol{\beta}}
\def\bgamma{\boldsymbol{\gamma}}
\def\bA{\boldsymbol{A}}
\def\bY{\boldsymbol{Y}}
\def\bX{\boldsymbol{X}}
\def\bx{\boldsymbol{x}}
\def\ba{\boldsymbol{a}}
\def\bg{\boldsymbol{g}}
\newcommand{\piast}{\pi^\ast}
\newcommand{\pihat}{\hat{\pi}}
\newcommand\indep{\protect\mathpalette{\protect\independenT}{\perp}}
\def\independenT#1#2{\mathrel{\rlap{$#1#2$}\mkern2mu{#1#2}}}
\newcommand{\E}{\ensuremath{\mathbb{E}}}
\newcommand{\Prob}{\ensuremath{\mathbb{P}}}
\newcommand{\Var}{\mathbb{V}}
\newcommand{\addDR}{\textsf{addDR}}
\newcommand{\addIPW}{\textsf{addIPW}}
\theoremstyle{definition}
\newtheorem{assumption}{Assumption}
\newtheorem{theorem}{Theorem}
\newtheorem{lemma}{Lemma}
\newtheorem{example}{Example}
\newtheorem{definition}{Definition}
\newtheorem{proposition}{Proposition}
\begin{document}

\def\spacingset#1{\renewcommand{\baselinestretch}%
{#1}\small\normalsize} \spacingset{1}


\spacingset{1}
\title{{\bf Individualized Policy Evaluation and Learning under Clustered Network Interference}\thanks{We thank Georgia Papadogeorgou, Davide Viviano, and anonymous reviewers of the Alexander and Diviya Magaro Peer Pre-Review Program for useful comments.}}

\author{Yi Zhang\thanks{Ph.D. Student, Department of Statistics, Harvard University. 1 Oxford Street, Cambridge MA 02138. Email: \href{mailto:yi_zhang@fas.harvard.edu}{ yi$\_$zhang@fas.harvard.edu}} \quad\quad Kosuke Imai\thanks{Professor, Department of Government and Department of Statistics, Harvard University.  1737 Cambridge Street,
      Institute for Quantitative Social Science, Cambridge MA 02138.
      Email: \href{mailto:imai@harvard.edu}{imai@harvard.edu} URL:
      \href{https://imai.fas.harvard.edu}{https://imai.fas.harvard.edu}}}

      
\date{\today}

\maketitle

\etocdepthtag.toc{mtchapter}
\etocsettagdepth{mtchapter}{subsection}
\etocsettagdepth{mtappendix}{none}

\bigskip
\begin{abstract}
Although there is now a large literature on policy evaluation and learning, much of the prior work assumes that the treatment assignment of one unit does not affect the outcome of another unit.
Unfortunately, ignoring interference can lead to biased policy evaluation and ineffective learned policies.
For example, treating influential individuals who have many friends can generate positive spillover effects, thereby improving the overall performance of an individualized treatment rule (ITR).
We consider the problem of evaluating and learning an optimal ITR under clustered network interference (also known as partial interference), where clusters of units are sampled from a population and units may influence one another within each cluster. 
Unlike previous methods that impose strong restrictions on spillover effects, such as anonymous interference, the proposed methodology only assumes a semiparametric structural model, where each unit's outcome is an additive function of individual treatments within the cluster.
Under this model, we propose an estimator that can be used to evaluate the empirical performance of an ITR.
We show that this estimator is substantially more efficient than the standard inverse probability weighting estimator, which does not impose any assumption about spillover effects.
We derive the finite-sample regret bound for a learned ITR, showing that the use of our efficient evaluation estimator leads to the improved performance of learned policies.
We consider both experimental and observational studies, and for the latter, we develop a doubly robust estimator that is semiparametrically efficient and yields an optimal regret bound.
Finally, we conduct simulation and empirical studies to illustrate the advantages of the proposed methodology. 

\bigskip
\noindent {\bf Keywords:} individualized treatment rules, partial interference, randomized experiments, spillover effects

\end{abstract}

\newpage

\section{Introduction}
\label{sec:intro}

During the past decade, a number of scholars have studied the problem of developing optimal individualized treatment rules (ITRs) that maximize the average outcome in a target population \citep[e.g.,][]{imai:stra:11,zhang2012estimating,zhao2012estimating,swaminathan2015counterfactual,kitagawa2018should,athey2021policy,zhang2022safe}. 
Beyond academia, these methods have played an essential role in the implementation of personalized medicine and micro-targeting in advertising and political campaigns.
In addition, new methodologies have been developed to evaluate the empirical performance of learned ITRs \citep{imai:li:23}.

Much of this existing policy evaluation and learning literature assumes no interference between units, i.e., one's outcome is not affected by the treatments of others.
Yet, in real-world applications, such spillover effects are the norm rather than an exception.
This means that there is a potential to exploit spillover effects when learning an optimal ITR by incorporating information about individuals and their network relationships.
For example, assigning influential and well-connected students to an anti-bullying program may more effectively reduce the number of conflicts within a school \citep{paluck2016changing}.
Another example is that individuals who are at the center of the social network can spread information more widely \citep{banerjee2019using}.

Despite these potential advantages, there exist key methodological challenges when learning ITRs in the presence of interference between units.
First, the structure of spillover effects is often little understood.
It is difficult to obtain information about people's relationships, and the existence of unobserved networks can invalidate the performance evaluation of the learned ITRs \citep{egami_2021}.
Second, the total number of possible treatment allocations increases exponentially, leading to the difficulty of inferring causal effects of high-dimensional treatments.

Thus, efficient individualized policy learning and evaluation require an assumption that is sufficiently informative to constrain the structure of spillover effects.
At the same time, we must avoid unrealistic assumptions.
In particular, many researchers assume anonymous (stratified) interference, in which spillover effects are determined by the number of treated neighbors regardless of which neighbors are treated \citep{toward_hudgens_2008,large_liu_2014,viviano2019policy}.
However, the way in which one unit's treatment influences another unit's outcome often depends on their specific relationship. 
Our goal is to relax this anonymous interference assumption by allowing for heterogeneous spillover effects for efficient policy evaluation and learning.

In this paper, we consider the evaluation and learning of individualized policies under clustered (partial) network interference \citep[see Section~\ref{sec:setup};][]{sobel2006randomized,toward_hudgens_2008,eric2012}.
Under this setting, units are grouped into non-overlapping clusters and interference arises within each of these disjoint clusters rather than between them.
In other words, the outcome of one unit is possibly affected by the treatments of the other units in the same cluster, but not by those of units in other clusters.  We first focus on experimental studies where the treatment assignment probabilities are known (Sections~\ref{sec:method}~and~\ref{sec:empiricalPL}).  We then consider observational studies, in which propensity scores are unknown and must be estimated (Section~\ref{sec:extension:obs}).

We propose a methodology for the evaluation and learning of individualized policies based on a semiparametric structural model.
Specifically, we assume that each individual's conditional mean outcome function is additive in the treatment vector of all individuals within the same cluster.
Importantly, under this additivity assumption, the proposed model uses individual-specific nonparametric functions that place no restriction on the heterogeneity of spillover effects.
The model, for example, accommodates the possibility that well-connected units within a cluster have a greater influence on other units.
Indeed, this semiparametric model contains as a special case the standard parametric model based on the anonymous (stratified) interference assumption.

Next, we introduce a new policy evaluation estimator that exploits the proposed semiparametric model.
Our estimator, which we call the additive inverse-probability weighting (\addIPW) estimator, leverages the semiparametric structural assumption of spillover effects without the need to fit an outcome model.
We show that the \addIPW{} estimator is unbiased and is more efficient than the standard IPW estimator, which makes no assumption about the structure of within-cluster spillover effects \citep{toward_hudgens_2008,liu2016inverse,tchetgen2012causal}.
Finally, using this \addIPW{} estimator, we find an optimal ITR within a pre-specified policy class.
Following the previous works \citep{kitagawa2018should,athey2021policy,zhou2023offline}, we show that the empirical policy optimization problem can be formulated as a mixed-integer linear program, which can be solved with off-the-shelf optimization tools.
Theoretically, we establish a finite-sample regret bound and show that the regret converges at the same optimal rate as the standard i.i.d. policy learning settings.

We further extend our methodology to observational studies with unknown treatment assignment probabilities.
We introduce an efficient semiparametric doubly robust policy evaluation estimator under our additivity assumption. This \addDR{} estimator is robust to estimation errors, either in the propensity score or outcome models. Leveraging results from the semiparametric literature, we establish an asymptotic regret bound that achieves a convergence rate comparable to that in the experimental setting.

We conduct simulation studies to examine the finite-sample empirical performance of the proposed methodology (Section~\ref{sec:simulation}).
We demonstrate the superiority of the proposed methodology over the optimal policy learned with the standard IPW estimators.
Lastly, we apply our methodology and learn an optimal ITR to increase the household-level school attendance among Colombian schoolchildren through its conditional cash transfer program (Section~\ref{sec:app}).

\paragraph{Related work.}
Numerous scholars have studied the problem of interference between units \citep[e.g.,][]{liu2016inverse,aronow2017estimating,bass:fell:17,athey2018exact,leung2020treatment,imai2021interference,savje2021average,hu2022average,li2022random,puelz2022,gao2023causal,ambarish2023}.
Much of this literature, however, has focused upon the estimation of various causal effects, including spillover and diffusion effects.
At the same time, others have studied policy learning and evaluation \citep[e.g.,][]{zhang2012estimating,zhao2012estimating,swaminathan2015counterfactual,kallus2018balanced,kitagawa2018should,chernozhukov2019semi,athey2021policy,jin2022policy,imai:li:23,zhou2023offline}.
The vast majority of this policy learning literature requires the assumption of no interference between units.
In contrast to these existing works, we consider the problem of policy learning and evaluation under clustered network interference where units influence one another within each cluster.

A relatively small number of studies have addressed the challenge of interference when learning optimal ITRs.
Some utilize parametric outcome models \citep{kitagawa2023should,ananth2020optimal}, while others adopt an exposure mapping approach \citep{ananth2020optimal,viviano2019policy,park2023minimum}.
These works often impose strong functional form assumptions on the conditional mean outcome model.
In particular, a vast majority of previous studies, if not all, rely on anonymous (or stratified) interference where spillover effects are assumed to be a function of the number of treated units in a cluster \citep{viviano2019policy,ananth2020optimal,park2023minimum}.
Our approach avoids placing these restrictive assumptions on the structure of spillover effects.

Two of the aforementioned studies are closely related to our work.
First, \cite{viviano2019policy} assumes anonymous interference but is able to develop an optimal policy learning methodology under a single network setting.
Our methodology avoids the anonymous interference assumption, but requires a random sample of clusters from a target population.
Second, like our work, \cite{park2023minimum} study policy learning in clustered network interference settings.
The authors consider an optimal cluster-level treatment policy that suggests the minimum proportion of treated units required within a cluster to achieve a pre-defined target average outcome level. 
Their methodology, however, cannot specify which individual units within a cluster should receive treatment.
In contrast, we propose an individualized policy learning methodology that optimally assigns treatments to individuals based on both individual and network characteristics.

There also exists a literature on policy evaluation under clustered network interference.
For example, \cite{eric2012} and \cite{large_liu_2014} study causal effect estimates under a ``type-B" policy, where units independently select to receive the treatment with a uniform probability.
In addition, \cite{papadogeorgou2019causal} and \cite{barkley2020causal} propose policy-relevant causal estimands based on a shift in parametric propensity score models. \cite{lee2022efficient} introduce an incremental propensity score intervention that further relaxes these parametric assumptions. 
In contrast, we propose an efficient policy evaluation estimator by leveraging the semiparametric additivity assumption that places a relatively weak restriction on the structure of spillover effects.
Our semiparametric model is closer to the one considered by \cite{yu2022estimating} who use the model to estimate treatment effects in a design-based single network setting.

As discussed later, a fundamental challenge of policy learning and evaluation under clustered network intereference is that the treatment assignment is high-dimensional.
In particular, the number of possible treatment combinations grow exponentially as the cluster size increases.
The problem of policy learning and evaluation with high-dimensional treatments has been studied in different contexts.
For example, \cite{Xu2023optimal} examine policy learning with multiple treatments, while \cite{chernozhukov2019semi} study policy learning with continuous treatments.

\section{The Problem Statement}\label{sec:setup}

We begin by describing the setup and notation of individualized policy learning under clustered network interference. 

\subsection{Setup and Notation}

Consider a setting in which observed units can be partitioned into clusters, such as households, classrooms, or villages. 
Let $M_i$ denote the number of units in cluster $i \in$ $\{1,2, \ldots, n\}$.
For unit $j \in \{1,2, \ldots, M_i\}$ in cluster $i$, we let $Y_{ij}\in \mathbb{R}$ represent the observed outcome, $A_{ij}\in\{0,1\}$ denote the binary treatment condition assigned to this unit, and $\bX_{ij}\in \mathbb{R}^p$ be the vector of $p$-dimensional pre-treatment covariates.
We use $\bY_i=\left(Y_{i 1}, Y_{i 2}, \ldots, Y_{i M_i}\right)^\top $, $\bA_{i}=\left(A_{i 1}, A_{i 2}, \ldots, A_{i M_i}\right)^\top $ and $\bX_i=\left(\bX_{i1}, \bX_{i2}, \ldots, \bX_{i M_i}\right)^\top$ to denote the cluster-level vectors of outcome and treatment, and the cluster-level matrix of pre-treatment covariates, respectively.
Lastly, we use $\mathcal{A}$, $\mathcal{X}$, and $\mathcal{Y}$ to represent the support of $A_{ij}$, $\bX_{ij}$, and $Y_{ij}$.
Thus, $\mathcal{A}\left(m_i\right)=\{0,1\}^{m_i}$  is the set of all the possible $2^{m_i}$ combinations of individual-level treatment assignments within a cluster of size $m_i$. 

Throughout the paper, we assume clustered network interference, also called partial interference, where interference between individuals only occur within a cluster rather than across different clusters.
This assumption is widely used in the literature when units are partitioned into non-overlapping clusters \citep[e.g.,][]{sobel2006randomized,toward_hudgens_2008,liu2016inverse,bass:fell:17,liu2019doubly,imai2021interference}.
Under this assumption, we define the potential outcome of one unit as a function of their own treatment as well as the treatment of others in the same cluster.
Formally, we let $Y_{ij}\left(\ba_i\right)$ denote the potential outcome of unit $j$ in cluster $i$ when the cluster is assigned to the treatment vector $\ba_i=\left(a_{i 1}, a_{i 2}, \ldots, a_{i M_i}\right)\in\mathcal{A}\left(M_i\right)$ and define the cluster-level potential outcome vector by
$\boldsymbol{Y}_i\left(\boldsymbol{a}_i\right)=\left(Y_{i1}\left(\boldsymbol{a}_i\right), Y_{i 2}\left(\boldsymbol{a}_i\right), \ldots, Y_{iM_i}\left(\boldsymbol{a}_i\right)\right)^\top$. 
Additionally, we make the standard consistency assumption that the vector of observed outcome, i.e., $\bY_i= \sum_{\ba_i \in \mathcal{A}\left(M_i\right)} \bY_i\left(\ba_i\right)\mathds{1}(\bA_i=\ba_i)$.

\begin{assumption}[i.i.d. Clusters]\label{ass:iidcluster}
    A total of $n$ clusters are sampled independently from a super-population.  That is, we assume $O_i=\left(\{\bY_i(\ba_i)\}_{\ba_i \in \mathcal{A}(M_i)}, \bA_i, M_i, \bX_i\right)$ is an independently and identically distributed random vector whose distribution is $\mathcal{O}$. For notational simplicity, we will include $M_i$ as part of $\bX_i$ unless explicitly noted otherwise.
\end{assumption}

Consider the cluster-level (generalized) propensity score under clustered network interference by jointly modelling all treatment assignments in a cluster, i.e., $e\left(\ba_i \mid \bX_i\right):=\Prob(\bA_i=\ba_i \mid \bX_i)$.
This represents the probability of observing treatment vector $\ba_i \in \mathcal{A}\left(M_i\right)$ given the cluster-level covariates $\bX_i\in \mathcal{X}\left(M_i\right)$, where $\mathcal{X}\left(M_i\right)$ is the support of $\bX_i$ for a cluster of size $M_i$.
The following assumption is maintained throughout this paper. 
\begin{assumption}[Strong Ignorability of Treatment Assignment]\label{ass:DGP} The following conditions hold for all $\ba_i \in \mathcal{A}\left(M_i\right)$ and $\bx_i \in \mathcal{X}\left(M_i\right)$.
\begin{enumerate}[label=(\alph*)]
\item \textit{Unconfoundedness}: $\bY_i\left(\ba_i\right) \indep \bA_i \mid \bX_i=\bx_i$ 
\item \textit{Positivity}: $\exists$ $\epsilon>0$ such that
 $\epsilon<e\left(\ba_i \mid \bx_i\right)<1-\epsilon$
\end{enumerate}
\end{assumption}
We first consider experimental settings, where propensity scores are known and hence Assumption~\ref{ass:DGP} is satisfied by design.
In Section~\ref{sec:extension:obs}, we extend our methodology to observational studies, in which the propensity scores are unknown and must be estimated under Assumption~\ref{ass:DGP}.

\subsection{Individualized Policy Learning Problem}

Without loss of generality, assume that a greater value of outcome is preferable.  Thus, we focus on learning an optimal individualized treatment rule (ITR) that maximizes the within-cluster average of outcome while allowing for possible interference between units within each cluster.
Although the individual outcome of interest is aggregated to the cluster-level, the learned policy will be a function of both individual-level and cluster-level characteristics.  Thus, our methodology provides guidance to policy makers about which individuals in what type of cluster should receive treatment.

We define an ITR as a mapping from the individual-level covariates $\bX_{ij}$ to a binary treatment decision for an individual unit, i.e., $\pi: \mathcal{X}\rightarrow\{0,1\}$.
The cluster-level vector of treatment assignment under a policy $\pi$, therefore, is given by 
$$\{\pi(\bX_{ij})\}_{j=1}^{M_i} \ =  \ (\pi(\bX_{i1}),\ldots,\pi(\bX_{iM_i}))\in\{0,1\}^{M_i}.$$
In practice, researchers can also specify a class of policies $\Pi$ that incorporate various constraints.
For example, the linear policy class is defined as,
\[
\Pi=\{\pi:\mathcal{X}\rightarrow \{0,1\}\mid \pi(X) = \mathds{1}(\gamma_0+X^\top\gamma\geq0),\quad \gamma_0\in\mathbb{R},\gamma\in\mathbb{R}^p\}.
\]
Other forms of policies, such as decision trees and decision tables, have also been considered \citep[e.g.,][]{athey2021policy,benm:etal:21,jia2023bayesian,zhou2023offline}.

Our individualized policy learning formulation enables different treatment decisions for units within the same cluster.
This potentially yields a much improved outcome and allows for individual-level cost, fairness, and other considerations.
Indeed, $\bX_{ij}$ may include not only the attributes of unit $j$, but also those of its neighbors or friends within the same cluster $i$, as well as cluster-level characteristics such as cluster size $M_i$ or other network attributes.
It is also possible to include the interactions between these individual, network, and cluster-level characteristics, as long as the dimension of $\bX_{ij}$ remains fixed.

While we allow policies to depend on any covariates at both individual and cluster levels, we impose the restriction that each individual's treatment decision does not directly depend on those of others.
Thus, individualization is achieved through covariates rather than formulating a different treatment rule $\pi$ for each individual.
In other words, our policy class precludes any joint treatment rules across individuals even when such policies achieve better outcomes.
Although a policy class that includes joint treatment rules is more general, specifying and optimizing such a policy when the cluster size varies leads to additional parameterization and optimization challenges.

To evaluate a policy $\pi\in\Pi$, we follow existing work and focus on the population mean of the potential outcome distribution.
A key departure from the literature on policy learning without interference is that we must consider how each individual's outcome depends on the treatment assignments of the other units in the same cluster.
Specifically, we define the value of policy $\pi$ as:
\begin{equation}\label{equ:V}
    V(\pi)=\mathbb{E}_{\mathcal{O}}\left[\frac{1}{M_i} \sum_{j=1}^{M_i} {Y}_{ij}\left(\{\pi(\bX_{ij})\}_{j=1}^{M_i}\right)\right],
\end{equation}
where the expectation is taken over the super-population of clusters $\mathcal{O}$.
Given a pre-specified policy class $\Pi$, we wish to find an optimal policy $\piast$ within this class that maximizes the policy value,
\begin{equation}\label{equ:piast}
    \begin{aligned}
      \pi^\ast \in \underset{\pi\in\Pi}{ \operatorname{argmax}}\   V(\pi).
        \end{aligned}
\end{equation}

As mentioned above, our framework does not allow for a policy class that directly constrains joint treatment decisions across individuals.
However, it is possible to discourage (or encourage) certain joint treatment decisions by incorporating a treatment cost function that depends on the treatment decisions of multiple individual units within the same cluster.
For example, in Section~\ref{sec:app}, we use a cost function that is proportional to the total number of units who receive treatment within each cluster.

Finally, we define the \textit{regret} of policy $\pi$ as the difference in value between the optimal policy $\piast$ and the policy under consideration:
\begin{equation}\label{equ:regret:def}
    R(\pi) = V\left(\pi^*\right)-V(\pi).
\end{equation}
Our goal is to learn a policy with minimal regret within a policy class whose complexity is bounded.

\section{Policy Evaluation}
\label{sec:method}

Before we turn to the problem of learning an optimal ITR, we must identify and estimate the policy value $V(\pi)$ defined in Equation~\eqref{equ:V} for any given policy $\pi$.
We first show that the high-dimensional nature of treatments under clustered network interference makes the standard IPW estimator inefficient.
To address this challenge, we propose a semiparametric model that imposes a constraint on the structure of spillover effects while allowing for unknown heterogeneity in how units affect one another in the same cluster.
We then propose a new efficient estimator of policy value that leverages this semiparametric outcome model.

\subsection{The Inverse-probability-weighting (IPW) Estimator}\label{sec:ident:naiveIPW}

Under Assumption~\ref{ass:DGP}, the policy value function in Equation~\eqref{equ:V} can be identified as
\begin{equation*}\label{equ:V:naiveidentify}
    V(\pi)=\mathbb{E}\left[\frac{1}{M_i} \sum_{j=1}^{M_i} \mathbb{E}\left[{Y}_{ij}\mid \bA_i=\{\pi(\bX_{ij})\}_{j=1}^{M_i},\bX_i\right]\right].
\end{equation*}
Then, the following IPW estimator can be used to estimate the policy value $V(\pi)$,
\begin{equation}\label{equ:Vhat:fullIPW}
    \begin{aligned}
      \widehat{V}^{\text{IPW}}(\pi)
      &=\frac{1}{n}\sum_{i=1}^{n} \frac{\mathds{1}\left\{  \bA_i=
      \{\pi(\bX_{ij})\}_{j=1}^{M_i}
     \right\}}{  e(
 \{\pi(\bX_{ij})\}_{j=1}^{M_i}
       \mid\bX_i)}\overline{Y}_{i},
    \end{aligned}
\end{equation}
where $\overline{Y}_i=\sum_{j=1}^{M_i} Y_{i j}/M_i$ is the cluster-level average of the observed individual outcomes.
The weight for cluster $i$ is given by the reciprocal of the cluster-level propensity score.

As noted earlier, we first consider an experimental study in which the propensity score is known.
Under this setting, it is straightforward to show that this IPW estimator is both unbiased and $\sqrt{n}-$consistent.
Similar IPW estimators have been applied to policy learning without interference \citep[e.g.,][]{swaminathan2015counterfactual,zhang2012estimating,zhao2012estimating,kitagawa2018should}.
Indeed, the IPW estimator given in Equation~\eqref{equ:Vhat:fullIPW} is a natural extension of the standard IPW estimator to clustered network interference and has been used for the estimation of treatment effects in the presence of interference \citep{tchetgen2012causal,liu2016inverse,papadogeorgou2019causal,imai2021interference}.

While Assumption~\ref{ass:DGP} is sufficient for causal identification, IPW estimators often suffer from a large variability when a data set is of moderate size.
This issue is exacerbated in clustered network interference settings because $\bA_i$ is a high-dimensional treatment vector.
Because there exist a large number of treatment combinations, the probability of any treatment assignment combination can take an extremely small value, leading to a high variance.
As an example, consider an Bernoulli randomization design where units are independently assigned to the treatment condition with probability $q\leq 0.5$.
Then, the inverse propensity score for treating all units in the same cluster is given by $1/q^{M_i}$, with the denominator scaling exponentially in cluster size $M_i$.
In practice, therefore, we may not even observe a single cluster whose treatment vector aligns with the treatment assignment under a given policy especially when the cluster size is relatively large.
In such cases, the IPW estimator is not applicable as one cannot precisely estimate the policy value.

A large variance of the standard IPW estimator will negatively affect the performance of subsequent policy learning, which optimizes the empirical estimate of the value $\widehat{V}^{\text{IPW}}(\pi)$ across all policies in the policy class $\Pi$.
A simulation study in Section~\ref{sec:simulation} demonstrates that a learned policy based on $\widehat{V}^{\text{IPW}}(\pi)$ can exhibit a slow rate of learning and substantially deviate from an optimal policy. 
An alternative is to use an efficient semiparametric estimator that is known to be asymptotically efficient and often exhibits improved finite-sample performance relative to the standard IPW estimator \citep{park2022efficient}.
Unfortunately, these efficient semiparametric estimators may still suffer from a large variance when the inverse-propensity weights are large.

Indeed, it is generally impossible to improve the IPW estimator without an additional assumption given that the IPW estimator has been shown to be minimax optimal (up to some constant factors) in the non-asymptotic regime \citep{wang2017optimal}.
This observation motivates our semiparametric modeling assumption, to which we now turn.

\subsection{A Semiparametric Additive Outcome Model}\label{sec:ident:PIEstor}

To reduce a large estimation variance, we propose a semiparametric model that places a restriction on the interference structure.
We seek a relatively weak but informative assumption that allows for a sufficiently complex pattern of interference within each cluster, while significantly improving the efficiency of policy value estimate.
Specifically, we assume that each individual's conditional mean potential outcome is linear in the treatment vector of all individuals within the same cluster.
\begin{assumption}[Heterogeneous Additive Outcome Model]\label{ass:additive}
The potential outcome model satisfies 
\begin{equation}\label{equ:additive}
\E\left[Y_{ij}(\ba_i)\mid \bX_i\right]=g^{(0)}_{j}(\bX_i)+\sum_{k=1}^{M_i}g^{(k)}_{j}(\bX_i)a_{ik} 
\end{equation}
for all $\ba_i$, where $g^{(k)}_{j}(\cdot):\mathcal{X}(m)\rightarrow \mathbb{R}$ and $\bg_{j}(\bx)=\left(g^{(0)}_{j}(\bx),g^{(1)}_{j}(\bx),\ldots,g^{(m)}_{j}(\bx)\right)^\top$ is an unknown treatment effect function sets that may vary across individual units with $m$ denoting a specific realization of the cluster size $M_i$. 
\end{assumption}

The expectation in Equation~\eqref{equ:additive} is taken over the sampling distribution of the clusters. The additive relationship holds for all units within each cluster and is invariant to any permutation of the unit index $j$ within cluster $i$.  The reason is that the coefficients $g_j^{(k)}(\bX_i)$ for $k=0,1,\dots,M_i$ are completely unrestricted and can be adapted to any ordering of units.
Finally, this assumption only characterizes the conditional expectation of potential outcomes given observed covariates $\bX_i$ while allowing for the presence of unmeasured effect modifiers that are not confounders. 

A key feature of the proposed model is that it does not restrict the degree of heterogeneity in spillover effects.
This is important because how individuals affect one another may depend on their specific relationships.
For example, the influence of one's close friend may be greater than that of an acquaintance. 
Moreover, spillover effects may be asymmetric with one person exerting greater effects on others without being influenced by them.
In other words, the causal effect of one unit's treatment on another unit's outcomes can depend on the characteristics of both units and their relationship.
Our model accommodates these and other possibilities by representing spillover effects with a nonparametric function that is specific to a directed relationship from one unit to another that depends on the whole cluster-level vector of characteristics.

The proposed model incorporates, as special cases, more restrictive assumptions on the structure of interference considered in the literature.
For example, scholars studied the following parametric linear-in-means model \citep[e.g.,][]{liu2016inverse,liu2019doubly,park2022efficient}. 
\begin{example}[Linear-in-means model]
\begin{equation}\label{equ:example1}
     \E\left[Y_{ij}(\ba_i)\mid \bX_i\right]=\gamma_1+\gamma_2 a_{ij}+ \gamma_3  \overline{a}_{i(-j)}+ \bgamma^\top_4\bX_{ij} + \bgamma^\top_5 \overline{\bX}_{i(-j)},
\end{equation}
where the potential outcome model is assumed to be a linear function of one's own treatment assignment and characteristics, the proportion of treated units (other than yourself) in the cluster $\overline{a}_{i(-j)}=\sum_{k\neq j}a_{ik}/(M_i-1)$, and the cluster-level mean of other units' characteristics $\overline{\bX}_{i(-j)}=\sum_{k\neq j}\bX_{ik}/(M_i-1)$, and $\gamma_1,\ldots,\bgamma_5$ are the coefficients.
We can show that this model is a special case of the proposed model given in Equation~\eqref{ass:additive} by setting $g^{(0)}_{j}(\bX_i)=\gamma_1+\bgamma^\top_4\bX_{ij} + \bgamma^\top_5 \overline{\bX}_{i(-j)}$, $g^{(j)}_{j}(\bX_i)=\gamma_2$, and $g^{(k)}_{j}(\bX_i)=\gamma_3(M_i-1)$ for all $k\neq j$.
\end{example}

A popular model based on the anonymous (stratified) interference assumption \citep[e.g.,][]{large_liu_2014,toward_hudgens_2008,tchetgen2012causal,bargagli2020heterogeneous,viviano2019policy,park2023minimum} is also a special case of our model.
\begin{example}[Additive nonparametric effect model under anonymous interference]
\begin{equation}\label{equ:example2}
     \E\left[Y_{ij}(\ba_i)\mid \bX_i\right]=h_0(\bX_{ij},\bX_{i(-j)})+h_1(\bX_{ij},\bX_{i(-j)})a_{ij}+h_2(\bX_{ij},\bX_{i(-j)})\overline{a}_{i(-j)},
\end{equation}
where $h:=(h_0,h_1,h_2)$ is a vector of nonparametric functions. 
This model is a nonparametric extension of the parametric model given in Equation~\eqref{equ:example1} under anonymous interference.
This model is a special case of Assumption~\eqref{ass:additive} with
$g^{(k)}_{j}(\bX_i)=h_2(\bX_{ij},\bX_{i(-j)})$ for all $k\neq j$.
\end{example}

While the proposed model enables units to arbitrarily influence one another within each cluster, it rules out an interaction between spillover effects.
For example, the effect of treating one child in a household cannot depend on whether their siblings are treated.
In Section~\ref{sec:higherorder}, we show that the proposed model can be extended to a more complex, semiparametric polynomial model that incorporates interaction terms.
Such a general model, however, may yield a highly variable estimate of policy value, worsening the performance of learned treatment rules.
In Section~\ref{sec:simulation}, we provide empirical evidence that the proposed model serves as a good approximation to a more complex interference structure and that the proposed estimator, introduced below, substantially outperforms the IPW estimator, which makes no structural assumption.

\subsection{Identification and Estimation}\label{sec:addIPW}

We now propose an estimator of policy value function that leverages the additive structural assumption (Assumption~\ref{ass:additive}).
Under this assumption, model complexity grows only linearly with cluster size $M_i$ rather than at an exponential rate.
Before describing our estimator, we introduce an additional assumption about the treatment assignment mechanism that replaces Assumption~\ref{ass:DGP}(b).
Specifically, we assume that treatments are assigned to individuals independently within each cluster conditional on the observed covariates.
\begin{assumption}\label{ass:factorCPS}(Factored cluster-level propensity score)
   The cluster-level treatment probability satisfies
   $$\Prob(\bA_i=\ba_i \mid \bX_i)=\prod_{j=1}^{M_i}\Prob(A_{ij}=a_{ij} \mid \bX_i).$$
In addition, there exists $\eta>0$ such that
 $\eta<\Prob(A_{ij}=a_{ij} \mid \bX_i)<1-\eta$ for any $a_{ij}\in\{0,1\}$.
\end{assumption}

Assumption~\ref{ass:factorCPS} is satisfied under a Bernoulli randomized trial, i.e., each $A_{ij}\sim \operatorname{Bern}(p_{ij})$ for $p_{ij}\in(0,1)$.
The assumption implies the conditional independence of treatment assignments across individuals within the same cluster \citep{eric2012}, i.e., $A_{ij}\indep\bA_{i(-j)}\mid \bX_i$ where $\bA_{i(-j)} \in \mathcal{A}\left(M_i-1\right)$ denotes the vector of treatment indicators for all units in cluster $i$ other than unit $j$.
Recall that Assumption~\ref{ass:DGP}(b) demands strong overlap for every treatment combination, which is unlikely to hold when $\bA_i$ is a high-dimensional treatment vector.
In contrast, Assumption~\ref{ass:factorCPS} only requires the individual-level propensity scores to be sufficiently bounded from zero. This means that the constant, which satisfies Assumption~\ref{ass:DGP}(b), is likely to satisfy Assumption~\ref{ass:factorCPS}.

Under this setup, we introduce the following additive Inverse-Propensity-Weighting (\addIPW) estimator of the policy value $V(\pi)$ for a given policy $\pi\in\Pi$,
\begin{equation}\label{equ:Vhat:addIPW}
  \widehat{V}^{\addIPW}(\pi)= \frac{1}{n}\sum_{i=1}^{n}\overline{Y}_{i} \left[\sum_{j=1}^{M_i}
  \left( \frac{\mathds{1}\left\{
       A_{ij}=\pi(\bX_{ij})
     \right\}}{  e_j(
 \pi(\bX_{ij})
      \mid \bX_{i})}
  -1\right)+1\right].  
\end{equation}
where $e_j\left(a_{ij} \mid \bX_i\right):=\Prob(A_{ij}=a_{ij} \mid \bX_i)$ is the individual-level propensity score, and subscript $j$ emphasizes the fact that units are allowed to have different propensity score models.

The \addIPW{} estimator is a weighted average of the cluster-level mean outcomes, where the weight of each cluster equals the \textit{sum} of individual inverse propensity scores up to a normalizing constant $M_i-1$.
When there is only a single unit within each cluster, i.e., $M_i = 1$ for all $i$, the estimator reduces to the standard IPW estimator under no interference settings.

Crucially, the \addIPW{} estimator leverages the linear additive assumption of the conditional outcome regression by ensuring that the cluster-level weights scale linearly with the individual-level inverse probability weights, which are typically of reasonable magnitude.
In contrast, the IPW estimator given in Equation~\eqref{equ:Vhat:fullIPW} uses the product of individual-level inverse probability weights, which tends to zero as cluster size increases, leading to a large variance.

We next show that when the propensity scores are known, the \addIPW{} estimator is unbiased for the policy value. 
\begin{proposition}[Unbiasedness]\label{prop:unbiased}
    Under Assumptions~\ref{ass:iidcluster},~\ref{ass:DGP}(a),~\ref{ass:additive},~and~\ref{ass:factorCPS}, and for $\forall\;\pi\in\Pi$, $$\E[\widehat{V}^{\addIPW}(\pi)]=V(\pi).$$ 
\end{proposition}
\begin{proof}
    See Appendix~\ref{appendix:proof:unbiasedness}.
\end{proof}
Below, we provide an additional intuition for this result.
First, using the law of iterated expectation and Assumption~\ref{ass:DGP}(a), we rewrite the value function under Assumption~\ref{ass:additive} as follows,
\begin{eqnarray}\label{equ:simp:estmand}
V(\pi) & = & \mathbb{E}\left[\frac{1}{M_i} \sum_{j=1}^{M_i}\left( g^{(0)}_{j}(\bX_i)+\sum_{k=1}^{M_i}g^{(k)}_{j}(\bX_i)\pi(\bX_{ik})\right)\right].
\end{eqnarray}
Thus, we can estimate the value function by substituting the unknown nuisance parameters $\bg_{j}(\bx)=\left(g^{(0)}_{j}(\bx),g^{(1)}_{j}(\bx),\ldots,g^{(m)}_{j}(\bx)\right)^\top$ with their empirical estimates.

Unconfoundedness (Assumption~\ref{ass:DGP}(a)) enables us to rewrite Equation~\eqref{equ:additive} using observable quantities.
We can then view $\bg_{j}(\bX_i)$ given $\bX_i$ as the coefficients of the treatment vector $\tilde{\bA}_i:=(1,A_{i1},\ldots,A_{iM_i})^\top$ in a unit-specific OLS regression of $Y_{ij}$ on $\tilde{\bA}_i$.
In principle, this regression problem cannot be directly solved due to non-identifiability, as there is only one observation for the $M_i+1$ predictors.
However, we can find the population solution $\bg_{j}(\bX_i)$ that minimizes the mean squared error (MSE), leading to the following MSE minimizer

\begin{equation}\label{equ:PI:oracle:estor}
\mathring{\bg}_j(\bX_i)=\mathbb{E}\left[\tilde{\bA}_{i}\tilde{\bA}_{i}^\top \mid \bX_i\right]^{-1} \mathbb{E}\left[\tilde{\bA}_{i}Y_{ij}\mid \bX_i\right].
\end{equation}

Since the matrix $\mathbb{E}\left[\tilde{\bA}_{i}\tilde{\bA}_{i}^\top \mid \bX_i\right]$ is a function of known propensity scores, we can directly compute it.
Under Assumption~\ref{ass:factorCPS}, this matrix is invertible and its inverse is given by,
\begin{equation}\label{equ:inverse:matrix}
\begin{aligned}
    &\mathbb{E}\left[\tilde{\bA}_{i}\tilde{\bA}_{i}^\top \mid \bX_i\right]^{-1}=\\
    & \left[\begin{array}{ccccc}1+\sum_{k=1}^{M_i} \frac{e_k(1\mid \bX_i) }{1-e_k(1\mid \bX_i)} & -\frac{1}{1-e_1(1\mid \bX_i)} & \cdots & \cdots & -\frac{1}{1-e_{M_i}(1\mid \bX_i)} \\ -\frac{1}{1-e_1(1\mid \bX_i)} & \frac{1}{e_1(1\mid \bX_i)\left(1-e_1(1\mid \bX_i)\right)} & 0 & \cdots & 0 \\ \vdots & 0 & \frac{1}{e_2(1\mid \bX_i)\left(1-e_2(1\mid \bX_i)\right)} & 0 & \vdots \\ \vdots & \vdots & 0 & \ddots & 0 \\ -\frac{1}{1-e_{M_i}(1\mid \bX_i)} & 0 & \cdots & 0 & \frac{1}{e_{M_i}(1\mid \bX_i)\left(1-e_{M_i}(1\mid \bX_i)\right)}\end{array}\right].
\end{aligned}
\end{equation}
Given that $\mathbb{E}\left[Y_{ij}\tilde{\bA}_{i}\mid \bX_i\right]$ is unknown, we replace it with the single realized observation  $(Y_{ij},\tilde{\bA}_{i})$ in Equation~\eqref{equ:PI:oracle:estor}, resulting in the following estimator,
\begin{equation}\label{equ:PI:empirical:estor}
\hat{\bg}_j(\bX_i)=\mathbb{E}\left[\tilde{\bA}_{i}\tilde{\bA}_{i}^\top \mid \bX_i\right]^{-1} \tilde{\bA}_{i} Y_{ij}.
\end{equation}

The unbiasedness of this estimator is immediate:
\begin{equation*}
\E\left[\hat{\bg}_j(\bX_i)\mid \bX_i\right]
=\mathbb{E}\left[\tilde{\bA}_{i}\tilde{\bA}_{i}^\top \mid \bX_i\right]^{-1}\mathbb{E}\left[\tilde{\bA}_{i}Y_{ij}\mid \bX_i\right]
=\mathbb{E}\left[\tilde{\bA}_{i}\tilde{\bA}_{i}^\top \mid \bX_i\right]^{-1} \mathbb{E}\left[\tilde{\bA}_{i}\tilde{\bA}_{i}^\top \mid \bX_i\right]\bg_j(\bX_i)
=
\bg_j(\bX_i).
\end{equation*}
Therefore, the linearity of expectation implies the following unbiased estimator of $V(\pi)$:
\begin{align}
    \widehat{V}(\pi)&=\frac{1}{n}\sum_{i=1}^{n} \frac{1}{M_i} \sum_{j=1}^{M_i}\left( \hat{g}^{(0)}_{j}(\bX_i)+\sum_{k=1}^{M_i}\hat{g}^{(k)}_{j}(\bX_i)\pi(\bX_{ik})\right)\nonumber\\
    &=\frac{1}{n}\sum_{i=1}^{n} \frac{1}{M_i} \sum_{j=1}^{M_i}\tilde{\pi}(\bX_i)^\top\mathbb{E}\left[\tilde{\bA}_{i}\tilde{\bA}_{i}^\top \mid \bX_i\right]^{-1} \tilde{\bA}_{i} Y_{ij}, \label{equ:PI:final:estimator}
\end{align}
where $\tilde{\pi}(\bX_i):=(1,\pi(\bX_{i1}),\ldots,\pi(\bX_{iM_i}))^\top$ is a binary treatment assignment vector under a given policy $\pi$.
Finally, substituting Equation~\eqref{equ:inverse:matrix} into Equation~\eqref{equ:PI:final:estimator} yields our estimator $\widehat{V}^{\addIPW}(\pi)$ given in Equation~\eqref{equ:Vhat:addIPW}.

Equation~\eqref{equ:PI:final:estimator} shows that the \addIPW{} estimator can also be written as a weighted average of individual outcomes based on the inverse of {\it individual-level} propensity scores.
Importantly, the proposed estimator utilizes the data from a cluster whose realized treatment assignment does not agree with the policy.
This contrasts with the IPW estimator given in Equation~\eqref{equ:Vhat:fullIPW} that equals a weighted average of cluster-level mean outcome using the inverse of {\it cluster-level} propensity scores, dropping any cluster whose realized treatment assignment does not match with the policy under consideration.
This difference explains why the variance of the \addIPW{} estimator is much smaller than that of the IPW estimator.
As demonstrated in Section~\ref{sec:regretbound}, this efficiency gain in policy evaluation leads to a better performance of policy learning.

The \addIPW{} estimator is derived by considering the unit-specific least squares regression of outcome on a treatment vector.
However, rather than explicitly fitting the outcome model for estimation, it modifies the weights of the IPW estimator such that it is consistent with the semiparametric outcome model.
A similar technique has been used in the previous literature for off-policy evaluation for online recommendation systems \citep{swaminathan2017off}, the estimation of total treatment effect in a design-based single network interference setting \citep{cortez2022exploiting}, and the estimation of average total treatment effect in bipartite network experiments \citep{harshaw2023design}.
In Section~\ref{sec:extension:obs}, we provide an additional justification of the \addIPW{} estimator by establishing its relation to the efficient semiparametric estimator under Assumption~\ref{ass:additive}.

We emphasize that the proposed unbiased estimator in Equation~\eqref{equ:PI:final:estimator} (as well as the generalized estimator proposed below in Section~\ref{sec:higherorder}) remains valid even without the factored propensity score assumption (Assumption~\ref{ass:factorCPS}).
Specifically, the estimator $\widehat{V}(\pi)$ remains valid so long as the experimental design matrix $\mathbb{E}\left[\tilde{\bA}_{i}\tilde{\bA}_{i}^\top \mid \bX_i\right]$ is invertible, although its expression may differ from $\widehat{V}^{\addIPW}(\pi)$ depending on the propensity score structure.
In scenarios where the treatment assignment ensures that every low-order treatment combination for a cluster has nonzero probability, this matrix is likely to be invertible.
Even if the experimental design matrix is not invertible, we can use the Moore-Penrose pseudoinverse in place of the matrix inverse, yielding an unbiased estimator.

\subsection{Semiparametric Model with Interactions}\label{sec:higherorder}

It is possible to extend our semiparametric additive model by including interactions.
Consider the following polynomial additive model,
\begin{equation}\label{equ:polyadditive}
    \E\left[Y_{ij}(\ba_i) \mid \bX_i\right]=\bg_{j}(\bX_i)^\top \phi(\ba_i),
\end{equation}
where we use the following augmented treatment vector that contains up to $\beta$-order interactions between treatments of different units with $\beta < m_{\max}$ where $m_{\max}$ is an upper bound of $M_i$,
\begin{equation} \label{equ:Amapping:poly}
\phi(\ba_i)=\left(1,\{a_{ij}\}_j,\{a_{ij_1}a_{ij_2}\}_{j_1\neq j_2},\ldots,\{a_{ij_1}a_{ij_2}\ldots a_{ij_{\beta}}\}_{j_1\neq \ldots \neq j_\beta}\right)^\top.
\end{equation}
Under this model, $\bg_{j}(\bx)$ represents the unknown heterogeneous effect function set whose size equals the length of $\phi(\ba_i)$.
We can derive an unbiased estimator for $V(\pi)$ as before,
\begin{equation}\label{equ:polyPI:final:estimator}
    \begin{aligned}
    \widehat{V}(\pi)
    &=\frac{1}{n}\sum_{i=1}^{n} \frac{1}{M_i} \sum_{j=1}^{M_i}\phi({\pi}(\bX_i))^\top\mathbb{E}\left[\phi(\bA_i)\phi(\bA_i)^\top \mid \bX_i\right]^{-1} \phi(\bA_i)Y_{ij}.
    \end{aligned}
\end{equation}

The explicit form of this estimator can be obtained by calculating the inverse of $\mathbb{E}\left[\phi(\bA_i)\phi(\bA_i)^\top \mid \bX_i\right]$, which contains up to the $\beta$-order product of individual-level treatment probabilities.
Since directly computing the inverse of this matrix is tedious, one may obtain the weights for individual outcomes by leveraging the linearity of expectation and unbiasedness property of the estimator \citep[see][for a similar technique]{yu2022estimating}.
Appendix~\ref{app:higherorder} provides an explicit expression of the proposed estimator under the general polynomial additive model.

In principle, $\phi(\bA_i)$ can be extended to a vector of at most length $\sum_{k=0}^{M_i} {M_i \choose k} =2^{M_i}$, which allows for all possible treatment interactions within a cluster of size $M_i$.
Under this extreme scenario, which implies no assumption about spillover effects, the proposed estimator can be shown to be equal to the IPW estimator given in Equation~\eqref{equ:Vhat:fullIPW}.
In practice, however, researchers must choose the value of polynomial order $\beta$ by considering a bias-variance tradeoff \citep[see also][who proposes a similar low-order interaction model in a design-based single network context]{cortez2022exploiting}.
We can further extend our model by letting $\beta$ depend on cluster size $M_i$. This approach will allow for the inclusion of fewer interactions in smaller clusters.

Our experience suggests that in most cases the linear or quadratic additivity assumption is sufficient for effective policy evaluation and learning.
Formally, when the true model includes higher-order interactions, our estimator based on the linear additive assumption can be interpreted as the following approximation to the true policy value $V(\pi)$ given in Equation~\eqref{equ:V}, i.e.,
\begin{equation}\label{equ:projection}
    \E\left[\frac{1}{M_i} \sum_{j=1}^{M_i}\bg_{j}^{\text{Proj.}}(\bX_i)^\top\tilde{\pi}(\bX_i)\right],
\end{equation}
where $\bg_{j}^{\text{Proj.}}(\bx)$ is the projection of each unit's true outcome function $\mu_j(\bA_i, \bX_i)=\E[Y_{ij}\mid \bA_i, \bX_i]$ onto the linear treatment vector space,
\begin{equation*}
    \bg_{j}^{\text{Proj.}}(\bX_i)=\arginf_{\bg}\;\mathbb{E}\left[\left(\mu_j(\bA_i,\bX_i)-\bg(\bX_i)^\top\tilde{\bA}_i\right)^2 \mid \bX_i\right].
\end{equation*}

For simplicity, we assume the linear additivity semiparametric model (i.e., Assumption~\ref{ass:additive}) throughout this paper and leave the data-driven choice of $\beta$ to future work.


\section{Policy Learning}\label{sec:empiricalPL}

We now consider the problem of policy learning.
Specifically, we solve the following empirical analog of the optimization 
problem given in Equation~\eqref{equ:piast} using our \addIPW{} estimator in Equation~\eqref{equ:Vhat:addIPW},
\begin{equation}\label{equ:pihat}
    \begin{aligned}
      \pihat:=  \underset{\pi\in\Pi}{ \operatorname{argmax}}\  \widehat{V}^{\addIPW}(\pi).
    \end{aligned}
\end{equation}
We first measure the learning performance of $\pihat$ by deriving a non-asymptotic upper bound on the true population regret of $\pihat$ defined in Equation~\eqref{equ:regret:def}, i.e., $R(\pihat)$.
We then show that the optimization problem can be solved using a mixed-integer linear program formulation.

\subsection{Regret Analysis}\label{sec:regretbound}

We establish a finite-sample regret bound for $\pihat$, assuming that the propensity scores are known.
In Section~\ref{sec:extension:obs}, we extend our theoretical results to observational studies where propensity scores are unknown and must be estimated.
We begin by stating the following standard assumptions.
\begin{assumption}\label{ass:regret:bound}
The following statements hold:
    \begin{enumerate}[label=(\alph*)]
\item \textit{Bounded outcome}: $\exists$ a constant $B\geq0$ such that $\lvert Y_{ij}(\ba_i)\rvert\leq B$ for all $\ba_i \in \mathcal{A}\left(M_i\right)$
\item \textit{Finite cluster size}: $\exists$ $m_{\max } \in \mathbb{N}$ such that $M_i \leq m_{\max }$ almost surely
\item \textit{Bounded complexity of policy class}: consider a policy class $\Pi$ of binary-valued functions $\pi:\mathcal{X}\rightarrow\{0,1\}$ that has a finite VC dimension $\nu<\infty$ 
\end{enumerate}
\end{assumption}

Assumption~\ref{ass:regret:bound}(a) is standard in the literature.
Assumption~\ref{ass:regret:bound}(b) restricts cluster size $M_i$ to be bounded, implying that cluster size is not too large relative to the number of clusters $n$.
The proposed methodology may not perform well when the number of clusters is small.
Assumption~\ref{ass:regret:bound}(c) restricts the complexity of the policy class $\Pi$ of interest using the concept of VC dimension \citep{vapnik2015uniform}.
This assumption is often made in the existing policy learning literature to avoid overfitting \citep[e.g.,][]{kitagawa2018should,athey2021policy}.
The assumption holds for common policy classes such as linear and fixed-depth decision trees.

\begin{theorem}[Finite-sample regret bound]\label{thm:regret}
Suppose Assumptions~\ref{ass:iidcluster},~\ref{ass:DGP}(a),~and~\ref{ass:additive}--\ref{ass:regret:bound} hold. Define $\pihat$ as the solution to Equation~\eqref{equ:pihat}.
For any $\delta>0$, with probability at least $1-\delta$, the regret of $\pihat$ can be upper bounded as,
\begin{equation}
    R(\pihat)\leq \frac{4C}{\sqrt{n}}+4c_0\frac{Bm_{\max}}{\eta}\sqrt{\frac{\nu}{n}}+2C\sqrt{\frac{2}{n}\log\frac{1}{\delta}}
\end{equation}
where $C=B[m_{\max}\times(\frac{1}{\eta}-1)+1]$ and $c_0$ is a universal constant.
\end{theorem}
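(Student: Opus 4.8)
The plan is to follow the standard two-step recipe for policy-learning regret bounds (as in Kitagawa--Tetenov and Athey--Wager), specialized to the addIPW estimator. The essential input is Proposition~\ref{prop:unbiased}, which gives $\E[\widehat{V}^{\text{addIPW}}(\pi)] = V(\pi)$ for every $\pi\in\Pi$, so that $\widehat{V}^{\text{addIPW}}(\pi)-V(\pi)$ is a centered empirical process over the i.i.d.\ clusters $O_1,\dots,O_n$. First I would reduce the regret to a uniform deviation: using $\widehat{V}^{\text{addIPW}}(\pihat)\ge\widehat{V}^{\text{addIPW}}(\piast)$ and adding and subtracting, one obtains
\begin{equation*}
 R(\pihat)\;\le\;2\,\sup_{\pi\in\Pi}\bigl|\widehat{V}^{\text{addIPW}}(\pi)-V(\pi)\bigr|.
\end{equation*}
Writing $\widehat{V}^{\text{addIPW}}(\pi)=\frac1n\sum_{i=1}^n f_\pi(O_i)$ with $f_\pi$ the per-cluster summand of Equation~\eqref{equ:Vhat:addIPW}, the bounded-outcome assumption $|\overline{Y}_i|\le B$, positivity $e_j>\eta$, and $M_i\le m_{\max}$ yield the uniform bound $|f_\pi(O_i)|\le C$ with $C=B[m_{\max}(1/\eta-1)+1]$ by a direct computation on the bracketed weight.

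Next I would split the control of the supremum into a concentration term and an expectation term. Since replacing a single cluster changes $\widehat{V}^{\text{addIPW}}(\pi)$ by at most $2C/n$ for every $\pi$, McDiarmid's bounded-differences inequality applied to $\Phi=\sup_{\pi}|\widehat V-V|$ gives, with probability at least $1-\delta$,
\begin{equation*}
 \sup_{\pi\in\Pi}\bigl|\widehat{V}^{\text{addIPW}}(\pi)-V(\pi)\bigr|\;\le\;\E\Bigl[\sup_{\pi\in\Pi}\bigl|\widehat V-V\bigr|\Bigr]+C\sqrt{\tfrac{2}{n}\log\tfrac{1}{\delta}}.
\end{equation*}
The expectation is then bounded by symmetrization, $\E[\sup_\pi|\widehat V-V|]\le 2\,\E[\mathfrak{R}_n]$, where $\mathfrak{R}_n=\E_\sigma\sup_\pi|\frac1n\sum_i\sigma_i f_\pi(O_i)|$ is the Rademacher complexity of $\{f_\pi:\pi\in\Pi\}$.

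The heart of the argument, and the step I expect to be the main obstacle, is bounding $\mathfrak{R}_n$: each $f_\pi$ aggregates the evaluations $\pi(\bX_{ij})$ over all individuals in a cluster, whereas the Rademacher signs $\sigma_i$ live at the cluster level. I would exploit $\pi(\bX_{ij})\in\{0,1\}$ to linearize. Expanding $\mathds{1}\{A_{ij}=\pi(\bX_{ij})\}/e_j(\pi(\bX_{ij})\mid\bX_i)$ as a function of $\pi(\bX_{ij})$ writes $f_\pi(O_i)=b_i+\sum_{j=1}^{M_i}\pi(\bX_{ij})\,w_{ij}$, where $b_i$ is free of $\pi$ with $|b_i|\le C$, and $|w_{ij}|\le B/\eta$. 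The $\pi$-free part contributes $\E_\sigma|\frac1n\sum_i\sigma_i b_i|\le(\frac1{n^2}\sum_i b_i^2)^{1/2}\le C/\sqrt n$ by Cauchy--Schwarz. For the $\pi$-dependent part I would pass the supremum inside the finite sum over the $m_{\max}$ individual ``slots'', $\sup_\pi\sum_j(\cdot)\le\sum_{j=1}^{m_{\max}}\sup_\pi(\cdot)$, and on each slot apply the Ledoux--Talagrand contraction inequality (the weight maps $t\mapsto w_{ij}t$ are $B/\eta$-Lipschitz and vanish at $0$) followed by the Dudley/Haussler VC bound for $\Pi$ of dimension $\nu$. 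The delicate points are the interchange of the supremum with the slot sum, which only costs the factor $m_{\max}$, and invoking the contraction principle underneath the absolute value, for which the standard symmetric version is used with the numerical factors absorbed into the universal constant $c_0$. This gives $c_0\,m_{\max}(B/\eta)\sqrt{\nu/n}$ for the $\pi$-dependent part, so $\E[\mathfrak{R}_n]\le C/\sqrt n+c_0\,(Bm_{\max}/\eta)\sqrt{\nu/n}$.

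Finally I would assemble the pieces. Combining the regret reduction with the concentration and symmetrization bounds gives
\begin{equation*}
 R(\pihat)\;\le\;4\,\E[\mathfrak{R}_n]+2C\sqrt{\tfrac{2}{n}\log\tfrac{1}{\delta}}\;\le\;\frac{4C}{\sqrt n}+4c_0\frac{Bm_{\max}}{\eta}\sqrt{\frac{\nu}{n}}+2C\sqrt{\frac{2}{n}\log\frac{1}{\delta}},
\end{equation*}
which is exactly the claimed bound, with $C=B[m_{\max}(1/\eta-1)+1]$.
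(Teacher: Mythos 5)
Your proposal is correct and follows essentially the same route as the paper's proof: the reduction $R(\pihat)\le 2\sup_{\pi\in\Pi}|\widehat V(\pi)-V(\pi)|$ via Proposition~\ref{prop:unbiased}, the uniform bound $C=B[m_{\max}(1/\eta-1)+1]$, concentration plus symmetrization (which the paper packages as a single citation to Theorem 4.10 of \cite{wainwright2019high}), and the same decomposition of the Rademacher complexity into a $\pi$-free part contributing $C/\sqrt{n}$ and a $\pi$-dependent part handled by padding clusters to size $m_{\max}$, swapping the supremum with the sum over slots at the cost of a factor $m_{\max}$, stripping the bounded weights, and applying the VC bound for $\Pi$. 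Your explicit invocation of the Ledoux--Talagrand contraction for the weight-stripping step is, if anything, slightly more careful than the paper's presentation, but it is the same argument.
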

\begin{proof}
    See Appendix~\ref{appendix:proof:finiteregretbound}.
\end{proof}

Theorem~\ref{thm:regret} provides a finite-sample upper bound on the regret, under the case of known propensity scores. 
The regret converges to zero at the rate of $1/\sqrt{n}$, which matches the optimal regret rate for i.i.d. policy learning \citep{kitagawa2018should,athey2021policy}.
Moreover, the bound linearly depends on the maximal cluster size, and is inversely proportional to the lower bound of the individual-level propensity score $\eta$.
This result is distinct from the regret bound based on the standard IPW estimator given in Equation~\eqref{equ:Vhat:fullIPW}, which is typically of order $O_p\left(\frac{B}{\eta^{m_{\max}}} \sqrt{\frac{\nu}{n}}\right)$.
While Theorem~\ref{thm:regret} assumes that the outcome model satisfies Assumption~\ref{ass:additive}, we find that the learned optimal policy \( \pihat \) in Equation~\eqref{equ:pihat} often achieves satisfactory learning performance in practice, even when the outcome model is misspecified (see Section~\ref{sec:simulation}). In such a misspecified case, \( \pihat \) optimizes the best linear semiparametric approximation of the true value function, given in Equation~\eqref{equ:projection}, reflecting an important bias-variance tradeoff in policy learning.

\subsection{Mixed-Integer Program Formulation}\label{sec:MIP}

The above results hold for the exact solution to Equation~\eqref{equ:pihat}. In general, solving Equation~\eqref{equ:pihat} leads to a nonconvex optimization problem, which is difficult to solve in practice.
For a certain policy class, however, we can considerably simplify the optimization problem using a mixed-integer program (MIP) formulation.
Such policy classes include linear decision rules, fixed-depth decision trees, and treatment sets with piecewise linear boundaries \citep[e.g.,][]{kitagawa2018should,viviano2019policy,athey2021policy,zhou2023offline}.

For example, consider a linear policy rule of the following form:
\[
\Pi=\{\pi:\mathcal{X}\rightarrow \{0,1\}\mid \pi(\bX) = \mathds{1}(\bX^\top\bbeta\geq0),\quad \bbeta\in\mathcal{B}\}.
\]
Following \cite{kitagawa2018should}, we introduce binary variable $p_{ij}$ and write 
\[
\frac{\bX_{ij}^\top \bbeta}{C_{ij}}<   p_{ij} \leq 1+\frac{\bX_{ij}^\top\bbeta}{C_{ij}}, \quad\quad C_{ij}>\sup _{\bbeta\in \mathcal{B}}\left|\bX_{ij}^\top \bbeta\right|, \quad\quad p_{ij} \in\{0,1\}.
\]
Then, $p_{ij}$ is equal to one if $\bX_{ij}^\top\bbeta$ is non-negative and zero otherwise, i.e., $p_{ij}=\pi(\bX_{ij})$. 
We now can write the objective function (up to some constants) as,
\[
 \frac{1}{n}\sum_{i=1}^{n}\overline{\bY}_{i}\sum_{j=1}^{M_i}
  \left( \frac{A_{ij}}{  e_j(1 \mid\bX_{i})}-\frac{1-A_{ij}}{  e_j(0 \mid\bX_{i})}\right)p_{ij}.
\]
This implies that Equation~\eqref{equ:pihat} can be equivalently represented as the following mixed-integer linear program (MILP), which can be solved using an off-the-shelf algorithm:
\begin{equation}
    \begin{aligned}
&\max_{\substack{\bbeta \in \mathcal{B}, \{p_{ij}\} \in \mathbb{R}}}  \; \frac{1}{n}\sum_{i=1}^{n}\overline{\bY}_{i}\sum_{j=1}^{M_i}
  \left( \frac{A_{ij}}{  e_j(1 \mid\bX_{i})}-\frac{1-A_{ij}}{  e_j(0 \mid\bX_{i})}\right)p_{ij}.\\
&\text { s.t. } 
\begin{aligned}
   & \frac{\bX_{ij}^\top \bbeta}{C_{ij}}<   p_{ij} \leq 1+\frac{\bX_{ij}^\top\bbeta}{C_{ij}} \quad \text { for } i=1, \ldots, n, \quad\text{and }\quad j=1,\ldots,M_i,\\
& p_{ij} \in\{0,1\},
\end{aligned}
\end{aligned}
\end{equation}
where constants $C_{ij}$ should satisfy $C_{ij}>\sup _{\bbeta \in \mathcal{B}}\left|\bX_{ij}^\top \bbeta\right|$. 

While this MIP formulation enables exact optimization for many policy classes, solving large-scale MIP problems can be computationally demanding, especially in settings with many clusters or large cluster sizes. As a computationally scalable alternative, we also consider a smooth stochastic approximation approach. Following \citet{fang2022fairness}, we approximate the binary deterministic ITR $\pi(\bX_{ij})$ with a logistic function $f(\bX_{ij}) = {1 + \exp(-\bX_{ij}^\top \bbeta)}^{-1}$. This replaces the discrete decision rule with a continuous and differentiable surrogate, enabling the use of fast gradient-based optimization algorithms.

\section{Extension to Observational Studies}\label{sec:extension:obs}

So far, we have focused on the experimental setting in which the propensity score is known.
In this section, we extend our methodology to observational studies in which the propensity score is unknown and must be estimated.
Consider a plug-in approach that directly replaces the true propensity score $e_j$ with its estimate $\hat{e}_j$ in the policy value estimator given in Equation~\eqref{equ:Vhat:addIPW}.
It can be shown that the resulting regret of $\pihat$ depends on the estimation error of unknown propensity scores (see Theorem~\ref{thm:regret:plugin} in Appendix~\ref{appendix:unknownPS}).
This implies that in all but the simplest cases, the plug-in approach will result in a sub-optimal rate, which is slower than $1 / \sqrt{n}$, for the learned policy $\hat{\pi}$. 

Given this suboptimality of the plug-in approach, we develop alternative efficient policy evaluation and learning methods by building on \cite{chernozhukov2019semi}, who studies efficient policy learning with continuous actions.
Specifically, we specialize their approach and propose a doubly robust policy value estimator under our linear additive structural assumption.
As shown below, this approach is robust to estimation errors of the propensity score model or the outcome regression model. Moreover, it attains the semiparametric efficiency bound under Assumption~\ref{ass:additive}, provided the estimation errors of the nuisance functions satisfy mild rate conditions.
Finally, we show that the established asymptotic regret bound of the learned optimal policy based on the doubly robust estimator achieves a fast $1 / \sqrt{n}$ convergence rate.

\subsection{Doubly Robust Estimator}

To define the doubly robust estimator, let $\mu(\ba,\bx)= \E\left[\bY_{i}\mid \bA_i=\ba,\bX_i=\bx\right]$ be the vector of true conditional expected outcomes in cluster $i$.
Using this notation, we rewrite Assumption~\ref{ass:additive} as,
\begin{equation}\label{equ:semi:model}
 \mu(\ba,\bx) =G(\bx) \phi(\ba)
\end{equation}
where $\phi(\ba):=(1,a_1,\ldots,a_m)^\top$ is the treatment assignment vector, and $G(\bx)$ is a $m\times(m+1)$ unknown function matrix of the following form,
\[
G(\bx):=\left(\begin{array}{c} \bg_{1}^\top(\bx) \\ \vdots \\\bg_{m}^\top(\bx) \end{array}\right)=\left(\begin{array}{ccc} g_1^{(0)}(\bx) & \ldots & g_1^{(m)}(\bx)\\ \vdots &&\vdots \\  g_m^{(0)}(\bx) & \ldots & g_m^{(m)}(\bx) \end{array}\right).
\]
Following the strategy described in Section~\ref{sec:higherorder}, it is possible to increase the model complexity by augmenting the treatment vector with interaction terms (see Equation~\eqref{equ:polyadditive}).
For the sake of simplicity, however, we assume the linear additive model in this section.

We also define the conditional covariance matrix of $\phi(\ba)$ as $\Sigma(\bx)=\E[\phi(\ba)\phi(\ba)^\top\mid \bx]$. 
The doubly robust estimator we propose below, will rely on estimates of these two nuisance functions.
Notice that $\Sigma(\bX_i)$ is exactly equal to the matrix $\mathbb{E}\left[\tilde{\bA}_{i}\tilde{\bA}_{i}^\top \mid \bX_i\right]$ defined in Section~\ref{sec:addIPW}, except that it now involves unknown propensity scores.
Due to the factorized propensity score assumption (Assumption~\ref{ass:factorCPS}), $\Sigma$ can be inverted, and the resulting expression is given by Equation~\eqref{equ:inverse:matrix}. 

Based on Equation~\eqref{equ:semi:model}, the policy value is identified as $V(\pi)=\E\left[ w(M_i)^\top G(\bX_i) \phi(\pi(\bX_i)) \right]$,
where $w(M_i)=\frac{1}{M_i}\mathbf{1}_{M_i}$ are the uniform weights for averaging the outcomes of units within the cluster and $\pi(\bX_i)=(\pi(\bX_{i1}),\ldots,\pi(\bX_{iM_i}))^\top$ is the vector of treatment assignment under policy $\pi$.
We propose the following doubly-robust estimator (\addDR),
\begin{equation}\label{equ:DRestor}
    \begin{aligned}
           \widehat{V}^{\addDR}(\pi) 
          &=\frac{1}{n}\sum_{i=1}^{n} w(M_i)^\top \widehat{G}_\addDR(\bY_i,\bX_i)\phi(\pi(\bX_i)),
    \end{aligned}
\end{equation}
where
\begin{equation}\label{equ:DRscore}
    \begin{aligned}
\widehat{G}_\addDR(\bY_i,\bX_i) = \widehat{G}(\bX_i)+\left(\bY_i-\widehat{G}(\bX_i)\phi(\bA_i)\right)\phi(\bA_i)^\top\widehat{\Sigma}(\bX_i)^{-1} 
      \end{aligned}
\end{equation}  
can be viewed as an estimate of $G(\bX_i)$ based on a single observation, and $ \widehat{G}$ and $\widehat{\Sigma}$ are the estimates for the nuisance quantities $G$ and $\Sigma$.
Equation~\eqref{equ:DRscore} has a form similar to the standard doubly robust estimators in the literature, which typically consist of an outcome regression estimate plus an augmented weighted residual term.
It can be easily shown that $\widehat{V}^{\addDR}(\pi)$ enjoys a doubly robust property --- it is consistent if either of the two nuisance models is consistently estimated. Therefore, the \addDR{} estimator is more robust to the estimation errors of propensity score and outcome regression models.
In addition, if we substitute $\widehat{G}_\addDR(\bY_i,\bX_i)$ in Equation~\eqref{equ:DRestor} with ${G}_\addIPW(\bY_i,\bX_i)=\bY_i\phi(\bA_i)^\top{\Sigma}(\bX_i)^{-1} $, this policy value estimator reduces to our \addIPW{} estimator, providing an additional justification of the proposed estimator under the experimental settings.

\subsection{Estimation of Nuisance Functions}


We give an example of constructing the \addDR{} estimator in practice.
Under Assumption~\ref{ass:factorCPS}, the formula of $\widehat{V}^{\addDR}(\pi)$ simplifies to
\begin{equation}\label{equ:DRestor:closedform}
    \begin{aligned}
        \widehat{V}^{\addDR}(\pi) &= \frac{1}{n}\sum_{i=1}^{n} \ \hat{\overline{\bg}}^\top(\bX_i)\phi(\pi(\bX_i)) + \left(\overline{\bY}_{i}- \hat{\overline{\bg}}^\top(\bX_i)\phi(\bA_i)\right) \left[\sum_{j=1}^{M_i}
  \left( \frac{\mathds{1}\left\{
       A_{ij}=\pi(\bX_{ij})
     \right\}}{  \hat{e}_j(
 \pi(\bX_{ij})
      \mid \bX_{i})}
  -1\right)+1\right]
    \end{aligned}
\end{equation}
where $\overline{\bg}^\top(\bx):=w(m)^\top G(\bx)=\frac{1}{m}\sum_{j=1}^{m}\bg^\top_{j}(\bx)$ denotes the vector of unknown function sets in the cluster-level average outcome model.
Thus, in this case, the construction of the \addDR{} estimator reduces to obtaining the nuisance estimates $\hat{\overline{\bg}}(\cdot)$ and $\{\hat{e}_j(\cdot)\}_{j=1}^m$.

We mainly discuss the estimation of $\overline{\bg}(\bx):=\left(\overline{g}^{(0)}(\bx),\overline{g}^{(1)}(\bx),\ldots,\overline{g}^{(m)}(\bx)\right)$ and $\{e_j(\bx)\}_{j=1}^m$ for $\bx\in\mathcal{X}$ using nonparametric sieve estimators \citep{newey1997convergence,chen2007large}.
Since cluster sizes $m$ vary and clusters with significantly different sizes may exhibit different behaviors, we propose stratifying the data based on cluster size and fitting the nuisance models separately for each $m$.
In practice, when sample size is limited for some clusters, one can alternatively fit a universal model using cluster size as one of the covariates in the model.

We employ cross-fitting \citep{chernozhukov2018double}. 
Specifically, we randomly split the sample of clusters into several disjoint folds such that each fold contains clusters of all cluster sizes, and the proportion of each cluster size type is nearly identical across different folds (see also Section~4.2 in \cite{park2022efficient}).
For each fold, we train the nuisance models on the remaining folds and predict the nuisance estimates on the held-out fold. This process is repeated for every fold, yielding nuisance predictions on the entire dataset.
For simplicity, we omit explicit dependence on the folds in our notation and focus on the estimation strategy throughout this section.

We define the conditional cluster-level average outcome model as $\overline{\mu}(\ba,\bx)= \E[\overline{\bY}_{i}\mid \bA_i=\ba,\bX_i=\bx]=\overline{\bg}(\bx)^\top\phi(\ba)=\overline{g}^{(0)}(\bx)+\sum_{j=1}^{m}\overline{g}^{(j)}(\bx)a_j$. Our goal is to estimate the nonparametric functions $\left(\overline{g}^{(0)}(\bx),\overline{g}^{(1)}(\bx),\ldots,\overline{g}^{(m)}(\bx)\right)$ separately for each stratum defined by the cluster size $m$.
Let $\{r_{m,k}(\bx)\}_{k=1}^\infty$ be a sequence of known basis functions (e.g, polynomials, splines). 
We impose a structural assumption on $r_{m,k}(\bx_j,\bx_{(-j)})$, requiring it to be permutation invariant with respect to the covariates of the remaining $(m-1)$ units in the same cluster. For example, $r_{m,k}(\bx_j,\bx_{(-j)})$ can be defined as a function of $\bx_j$ and summary statistics of $\bx_{(-j)}$ within each subset $\{(-j)\}$, such as the mean or second moment.
Let $\hat{\overline{g}}_K^{(j)}(\bx)$ denote
the estimator for $\overline{g}^{(j)}(\bx)$, using the first $K$ basis functions $R_{m,K}(\bx_j,\bx_{(-j)}):=\left(r_{m,1}(\bx_j,\bx_{(-j)}),\ldots,r_{m,K}(\bx_j,\bx_{(-j)}) \right)^\top$, given in the form of
\[
\hat{\overline{g}}_K^{(j)}(\bx)=R_{m,K}(\bx_j,\bx_{(-j)})^\top \hat{\btheta}_{m,K},\quad j = 1, \ldots, m,
\]
where $\hat{\btheta}_{m,K}$ represents the coefficients to be estimated.
We assume that the coefficient parameter \( \btheta_{m,K} \) does not depend on the unit index $j$.
This assumption is reasonable when the heterogeneity of spillover effects can be fully captured by the covariates \( \{\bx_j\}_{j=1}^m \) of the units. 
Intuitively, $\hat{\overline{g}}_K^{(j)}(\bx)$ provides a better approximation of ${\overline{g}}^{(j)}(\bx)$ as $K$ increases. 
Similarly, the intercept term $\overline{g}^{(0)}(\bx)$ is approximated using a sieve estimator $\hat{\overline{g}}_K^{(0)}(\bx)=R_{m,K}(\bx)^\top \hat{\bgamma}_{m,K}$. 

Finally, we estimate ${\btheta}_{m,K}$ and ${\bgamma}_{m,K}$ by fitting an OLS regression to clusters of the same size:
\begin{equation}
          \overline{\mu}(\bA_{i}, \bX_i) \ = \   R_{m,K}(\bX_i)^\top\cdot {\bgamma}_{m,K} + \sum_{j=1}^{m} A_{ij} \cdot R_{m,K}(\bX_{ij},\bX_{i(-j)})^\top {\btheta}_{m,K},
\end{equation}
where the summation term is permutation invariant.
This approach eliminates the need to enumerate individual units, as the basis functions are permutation invariant with respect to the covariates of neighboring units, and the effect coefficients $\btheta_{m,K}$ and $\bgamma_{m,K}$ do not depend on $j$. 
        
Similarly, we estimate the propensity scores $\{{e}_j(\bx)\}_{j=1}^m$ using (a possibly different) $K$ approximation functions $R_{m,K}(\bx)$.
To simplify the estimation process and avoid the need to enumerate units within a cluster, we may assume a universal model for all units $j$ for clusters of size $m$:
\begin{equation}
       {e}_j(\bx) = \Prob\left(A_{ij}=1 \mid \bX_{ij}=\bx_{j},\bX_{i(-j)}=\bx_{(-j)}\right) = \operatorname{expit}\left(R_{m,K}(\bx_j,\bx_{(-j)})^\top\btau_{m,K}\right)
\end{equation}
where \( \operatorname{expit}(z) = \frac{1}{1 + e^{-z}} \) is the sigmoid function. The parameter \( \btau_{m,K} \) is then estimated by fitting a logistic regression across all units within clusters of size $m$.

In the above example, we focus on nonparametric series estimators, which are consistent under regularity conditions. 
Notably, \cite{qu2021efficient} also employs nonparametric series estimators for nuisance models.  However, their approach assumes the conditional exchangeability of potential outcomes, which is distinct from our semiparametric structural assumption.
In practice, alternative parametric or nonparametric estimators, such as matching, kernel regression, and other machine learning methods, may also be used to estimate the propensity and outcome models. 


\subsection{Theoretical Analysis}

Next, we establish the theoretical properties of the doubly robust estimator while remaining agnostic to the specific method used to obtain nuisance estimates $\hat{\overline{\bg}}(\cdot)$ and $\widehat{\Sigma}(\cdot)$ and simply imposing high-level conditions on their convergence rates.
Throughout, we present the theoretical results for the simpler case where the nuisance estimates are trained on an independent, separate data split. However, these results qualitatively extend to the case where the cross-fitting technique is applied.

\begin{assumption}[Convergence rates of estimated nuisance functions]\label{ass:rate}
   For any fixed cluster size $M_i=m\in\{1,\ldots,m_{\max}\}$, the nuisance estimates $\hat{\overline{\bg}}(\cdot)$ and $\widehat{\Sigma}(\cdot)$ satisfy 
   $$
   \sup_{\bx}\left\{\left|\hat{\overline{g}}^{(j)}(\bx)-{\overline{g}}^{(j)}(\bx)\right|\right\} \xrightarrow{p} 0\quad \text{for}\ j=0,\ldots,m, \quad  \sup_{\bx} \left\|\widehat{\Sigma}(\bx)-\Sigma(\bx)\right\|_{\infty} \xrightarrow{p} 0
   $$
   and 
    $$
\begin{aligned}
 \E\left[\left(\hat{\overline{g}}^{(j)}(\bX)-{\overline{g}}^{(j)}(\bX)\right)^2
 \right] & = O_{p}\left(r_{n,g}^2\right)\quad \text{for}\ j=0,\ldots,m, \\
\E\left[ \left\|\widehat{\Sigma}(\bX)-\Sigma(\bX)\right\|^2_{F}\right]& =O_{p}\left(r_{n,\Sigma}^2\right),
\end{aligned}
$$
where $r_{n,g}=o(1)$, $r_{n,\Sigma}=o(1)$ and $r_{n,g}\cdot r_{n,\Sigma}=o(1/\sqrt{n})$. Here, $\|\cdot\|_{F}$ represents the Frobenius norm of a matrix.
\end{assumption}

Assumption~\ref{ass:rate} requires that the nuisance estimators $\hat{\overline{\bg}}(\cdot)$ and $\widehat{\Sigma}(\cdot)$ are uniformly consistent and that the product of their mean squared errors (MSE) achieves the $o_p(1/\sqrt{n})$ rate.
Such conditions have been extensively used in the semiparametric estimation literature \citep[e.g.,][]{newey1997convergence,chernozhukov2018double} and can be satisfied by many machine learning and nonparametric estimators, depending on the regularity of the underlying function being estimated.
In experimental studies where the true propensity scores are known, the estimated outcome model can converge to the true outcome model at any rate while still satisfying Assumption~\ref{ass:rate}.

Under Assumptions~\ref{ass:additive}~and~\ref{ass:rate} and an additional assumption of homoskedastic error, the variance of the doubly robust estimator achieves the semiparametric efficiency bound for estimating the policy value of a given policy.
\begin{assumption}[Homoskedasticity]\label{ass:homo}
Assume homoscedastic error in the outcome model, i.e.,
\[\mathbb{E}\left[ (\overline{\bY}- \overline{\mu}(\bA,\bX) )(\overline{\bY}- \overline{\mu}(\bA,\bX) )\mid \bA,\bX\right]=\mathbb{E}\left[ (\overline{\bY}- \overline{\mu}(\bA,\bX) )(\overline{\bY}- \overline{\mu}(\bA,\bX) )\right].\]
\end{assumption}
The assumption of homoskedastic error in the residual function is often seen in the semiparametric literature \citep{robinson1988root,ai2003efficient,chamberlain1992efficiency}.
\begin{theorem}[Semiparametric Efficiency]\label{thm:semieff}
Under Assumptions~\ref{ass:iidcluster}-\ref{ass:additive},~\ref{ass:regret:bound}-\ref{ass:homo}, $\widehat{V}^{\addDR}(\pi)$ is semiparametrically efficient for estimating $V(\pi)$ for any given $\pi$.
\end{theorem}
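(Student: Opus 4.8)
The plan is to identify the efficient influence function (EIF) of $V(\pi)$ for the semiparametric model defined by Assumptions~\ref{ass:additive},~\ref{ass:factorCPS},~and~\ref{ass:homo}, and then to show that $\widehat{V}^{\DR}(\pi)$ is asymptotically linear with that EIF as its influence function; since attaining the variance $\E[\psi^2]$ of the EIF is the definition of semiparametric efficiency, this suffices. First I would write the observed-data likelihood of $O_i=(\bY_i,\bA_i,\bX_i)$ as the product of three factors: the marginal law of $\bX_i$ (unrestricted), the factored propensity score $e(\bA_i\mid\bX_i)=\prod_j e_j(A_{ij}\mid\bX_i)$, and the conditional law of the residual $\bY_i-\mu(\bA_i,\bX_i)$, where additivity (Assumption~\ref{ass:additive}) forces $\mu(\ba,\bx)=G(\bx)\phi(\ba)$ and homoskedasticity (Assumption~\ref{ass:homo}) constrains its $w$-weighted second moment. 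Decomposing the tangent space as the orthogonal direct sum of the score spaces contributed by these three factors is the organizing step.

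Next I would compute the pathwise derivative of $V(\pi)=\E[w(M_i)^\top G(\bX_i)\phi(\pi(\bX_i))]$ along an arbitrary regular parametric submodel, and propose as the candidate EIF the influence function obtained by plugging the true nuisances into $\widehat{G}_\DR$, namely
\begin{equation*}
\begin{aligned}
\psi(O_i)&=w(M_i)^\top G(\bX_i)\phi(\pi(\bX_i))-V(\pi)\\
&\quad+w(M_i)^\top\bigl(\bY_i-G(\bX_i)\phi(\bA_i)\bigr)\phi(\bA_i)^\top\Sigma(\bX_i)^{-1}\phi(\pi(\bX_i)).
\end{aligned}
\end{equation*}
I would then verify the two defining properties of the EIF: that $\psi$ lies in the model's tangent space, and that $\E[\psi\,S]$ reproduces the pathwise derivative for the score $S$ of every submodel. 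Mean-zeroness ($\E[\psi]=0$) follows from the same residual-orthogonality argument used for Proposition~\ref{prop:unbiased}, since $\E[(\bY_i-\mu(\bA_i,\bX_i))\phi(\bA_i)^\top\mid\bX_i]=0$. The first line of $\psi$ is a function of $\bX_i$ alone and sits in the $\bX$-marginal tangent space; the correction term has conditional mean zero given $(\bA_i,\bX_i)$ and therefore sits in the outcome-residual tangent space and is automatically orthogonal to the propensity-score tangent space. This last orthogonality both explains the double robustness and implies that the efficiency bound is unchanged whether $e$ is known or estimated.

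The crux---and the step I expect to be the main obstacle---is showing that the $\Sigma(\bX_i)^{-1}$ weighting in the correction term is exactly the projection of the gradient onto the \emph{constrained} residual tangent space, so that $\psi$ is not merely \emph{an} influence function but the efficient one. This is precisely where Assumption~\ref{ass:homo} is indispensable: under homoskedasticity the minimum-variance linear combination of the residual vector estimating the additive coefficients $G(\bX_i)$ is the ordinary-least-squares projection weighted by $\Sigma(\bX_i)^{-1}=\E[\phi(\bA_i)\phi(\bA_i)^\top\mid\bX_i]^{-1}$, mirroring the Gauss--Markov optimality of OLS; under heteroskedasticity the efficient weighting would instead involve the conditional outcome covariance, and the $\Sigma^{-1}$-weighted correction would remain consistent and doubly robust but no longer efficient. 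I would make this rigorous by specializing the efficiency calculation of \cite{chernozhukov2019semi} for partially linear structural models with actions to our design $\phi(\ba)=(1,a_1,\ldots,a_m)^\top$, checking that their regularity conditions hold under Assumptions~\ref{ass:DGP}(a) and~\ref{ass:factorCPS} (in particular that $\Sigma(\bx)$ is invertible, as guaranteed by Equation~\eqref{equ:inverse:matrix}).

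Finally, I would establish the asymptotic linearity $\sqrt{n}\bigl(\widehat{V}^{\DR}(\pi)-V(\pi)\bigr)=\tfrac{1}{\sqrt{n}}\sum_{i=1}^n\psi(O_i)+o_p(1)$ by a standard cross-fitting argument: the estimator is Neyman-orthogonal in $(G,\Sigma)$, so the drift from substituting $\widehat{G}$ and $\widehat{\Sigma}$ is second order in the nuisance estimation errors and hence negligible provided those errors vanish at an $o_p(n^{-1/4})$ rate. Combining this expansion with the EIF characterization yields $\sqrt{n}(\widehat{V}^{\DR}(\pi)-V(\pi))\rightsquigarrow N(0,\E[\psi^2])$ with $\E[\psi^2]$ equal to the semiparametric efficiency bound, completing the proof.
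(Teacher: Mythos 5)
Your proposal is correct and reaches the same conclusion, but it takes a genuinely different route from the paper. The paper does not construct the efficient influence function or decompose the tangent space at all: it writes Assumption~\ref{ass:additive} as the conditional moment restriction $\E[\bY-G(\bX)\phi(\bA)\mid\bA,\bX]=\mathbf{0}$, invokes the efficiency bound of \cite{chamberlain1992efficiency} for such restrictions with a nonparametric component, simplifies that bound under homoskedasticity to $\Var[w(M)^\top G(\bX)\phi(\pi(\bX))]+\E[\sigma^2\,\phi(\pi(\bX))^\top\Sigma(\bX)^{-1}\phi(\pi(\bX))]$, and then directly computes the variance of $w(M)^\top \widehat{G}_{\DR}\phi(\pi(\bX))$ at the true nuisances to show the two coincide. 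The step you correctly identify as the crux --- proving that the $\Sigma^{-1}$-weighted correction is the projection onto the \emph{constrained} residual tangent space, which is where homoskedasticity enters --- is exactly the content that Chamberlain's theorem packages up, so the paper's route avoids re-deriving it while yours would have to carry it out explicitly (your Gauss--Markov intuition is the right one, and your observation that under heteroskedasticity the efficient weight would involve the conditional outcome covariance matches the pre-simplification form of the paper's bound). Your influence function $\psi$ agrees with the paper's: $\E[\psi^2]$ is precisely the variance the paper computes. What your approach buys is a self-contained, first-principles derivation plus an explicit asymptotic-linearity/cross-fitting argument for the estimated-nuisance case, which the paper does not attempt (its proof only establishes that the variance at the true nuisances attains the bound); what the paper's approach buys is brevity and a clean delegation of the hardest projection argument to a classical reference.
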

\begin{proof}
    See Appendix~\ref{appendix:pf:semipara}.
\end{proof}
This result suggests that the \addDR{} estimator can yield efficient policy value estimates in observational studies.
In experimental studies where the propensity score is known, our estimator \addDR{} can achieve a greater reduction in variance in the estimation of the policy value than \addIPW{} estimator while maintaining unbiasedness.

Finally, we perform policy learning based on this \addDR{} estimator: 
\[
\pihat^{\addDR}:=  \underset{\pi\in\Pi}{ \operatorname{argmax}}\  \widehat{V}^{\addDR}(\pi).\]
In Appendix~\ref{appendix:MILP:DR}, we show that this policy optimization problem can also be formulated as a linear MIP problem, enabling the use of off-the-shelf algorithms.
More importantly, we establish the asymptotic regret bound for $\pihat^{\addDR}$, demonstrating its fast convergence rate with optimal dependence on both the sample size $n$ and the policy class $\Pi$.


\begin{theorem}[Regret bound with doubly robust estimator]\label{thm:regret:DR}
    Suppose Assumptions~\ref{ass:iidcluster}-\ref{ass:additive},~\ref{ass:regret:bound}-\ref{ass:rate} hold. The regret of $\pihat^{\addDR}$ can be upper bounded as,
\begin{equation*}
    R(\pihat^{\addDR})\leq O_p\left(\sqrt{\frac{\nu}{n}}\right).
\end{equation*}
\end{theorem}
\begin{proof}
    See Appendix~\ref{appendix:pf:DRregret}.
\end{proof}
The regret bound is asymptotic in the number of clusters $n$ as in the literature \citep{athey2021policy,zhou2023offline}. Furthermore, it achieves regret guarantees with optimal dependence on both the sample size $n$ and the policy class complexity $\nu$, matching the fast convergence rate established in Theorem~\ref{thm:regret} under the assumption of known propensity scores.

\section{Simulation Studies}\label{sec:simulation}

We conduct simulation studies to assess the finite-sample learning performance of our proposed methodology.  We compare this against the performance of the oracle policy, the learned policy under anonymous interference, and the one based on the standard IPW estimator.
We also examine the finite-sample performance of the proposed policy value estimators for policy evaluation.

\subsection{Setup}
We generate $n\in \{50,100,200,400,800\}$ clusters, and for each cluster $i$, we randomly generate cluster size $M_i\in\{5,10,15\}$ with uniform probability. 
For each unit $j$, we independently sample four covariates  $(X_{i j 1},\ldots,X_{i j 4})$ from the standard normal distribution. We generate the treatment variable $A_{ij}$ from independent Bernoulli distributions with success probability of 0.3.
Thus, our propensity score model satisfies Assumption~\ref{ass:factorCPS}.
Throughout, we assume that the propensity score is known.
Lastly, we sample the outcome variable $Y_{ij}$ from the following two models.
\begin{equation}
\begin{aligned}\label{equ:sim:DGP}
  (A)\quad   & Y_{i j}\mid \bA_i, \bX_i \sim N\left(\mu_{i j},1\right), \quad \text{where}\\
  & 
      \mu_{ij}=
(X_{ij1}+0.5X_{ij2}-X_{ij3}-0.5X_{ij4}) A_{ij}
+ 1.5\frac{ \sum_{j^\prime\neq j} (X_{ij'3} +X_{ij'4})A_{ij^\prime} }{M_i-1}+0.2X_{ij2}+0.2X_{ij3},\\ 
(B)\quad   & Y_{i j}\mid \bA_i, \bX_i \sim N\left(\mu_{i j},1\right), \quad \text{where}\\
 &
 \mu_{ij}=
(X_{ij1}+0.5X_{ij2}-X_{ij3}-0.5X_{ij4}) A_{ij}
+ 1.5\frac{ \sum_{j^\prime\neq j} (X_{ij'3} +X_{ij'4})A_{ij^\prime} }{M_i-1} +0.2X_{ij2}+0.2X_{ij3}\\ 
& \qquad -1.5(X_{ij1}^2+ X_{ij2}^2)A_{ij}\overline{A}_{i(-j)},
 \end{aligned}
\end{equation}
The outcome regression model under Scenario~A above satisfies Assumption~\ref{ass:additive}, while the outcome model under Scenario~B, which includes interaction terms, does not.

We consider the following class of linear thresholding policies,
\[
\pi(\bX) = \mathds{1}(\beta_0+\beta_1X_1+\beta_2X_2+\beta_3X_3+\beta_4X_4\geq0).
\]
Within this policy class, we find the best policy using our proposed \addIPW{} and \addDR{} policy evaluation estimators.
Since the propensity scores are known, we only need to train the cluster-level outcome regression model for the \addDR{} estimator. We fit a linear regression model using all cluster-level covariates and their interaction terms with cluster-level treatments.

For comparison, we also consider optimal policies learned based on three standard IPW estimators.
The first is the IPW estimator with {\it unknown interference} $\widehat{V}^{\textsf{IPW}}(\pi)$, which is given in Equation~\eqref{equ:Vhat:fullIPW} and makes no assumption about the interference structure.
The second is the IPW estimator with the widely used {\it anonymous interference} assumption, denoted by $\widehat{V}^{\textsf{AnonyInt}}(\pi)$, which assumes that one's outcome depends only on the proportion of the treated units in the same cluster,
$$\widehat{V}^{\textsf{AnonyInt}}(\pi)= \frac{1}{n}\sum_{i=1}^{n} \frac{1}{M_i}\sum_{j=1}^{M_i}
 \frac{\mathds{1}\left\{
       A_{ij}=\pi(\bX_{ij}), \sum_{k\neq j } A_{ik}=\sum_{k\neq j } \pi(\bX_{ik})
     \right\}}{  \Prob\left(
   A_{ij}=\pi(\bX_{ij}), \sum_{k\neq j } A_{ik}=\sum_{k\neq j } \pi(\bX_{ik})
      \mid \bX_{i}\right)}Y_{ij}.$$
The third is the standard IPW estimator with {\it no interference} \citep{kitagawa2018should}, which 
is slightly adjusted for our cluster setting, i.e., 
\begin{equation*}
\widehat{V}^\textsf{NoInt}(\pi)= \frac{1}{n}\sum_{i=1}^{n} \frac{1}{M_i}\sum_{j=1}^{M_i}
 \frac{\mathds{1}\left\{
       A_{ij}=\pi(\bX_{ij})
     \right\}}{  e_j(
 \pi(\bX_{ij})
      \mid \bX_{i})}Y_{ij}.
      \end{equation*}

For our proposed estimators and the standard IPW estimator with no interference $\widehat{V}^\textsf{NoInt}(\pi)$, we formulate the optimization problem as an MILP problem, which we efficiently solve using a stochastic approximation approach described in Section~\ref{sec:MIP}. 

In contrast, the estimators, $\widehat{V}^{\textsf{IPW}}(\pi)$ and $\widehat{V}^{\textsf{AnonyInt}}(\pi)$, lead to more challenging optimization problems.  The former involves the cluster-level policy indicator variable, i.e., $\mathds{1}\left\{  \bA_i=\{\pi(\bX_{ij})\}_{j=1}^{M_i}\right\}$, while the latter involves the indicator of the number of treated neighbors within the same cluster, i.e., $\mathds{1}\left\{
      \sum_{k\neq j } A_{ik}=\sum_{k\neq j } \pi(\bX_{ik})
     \right\}$.
To address these challenges, we again apply stochastic approximation techniques. For the cluster-level policy indicator, we use the stochastic ITR approximation described above. For the anonymous interference estimator $\widehat{V}^{\textsf{AnonyInt}}(\pi)$, we smooth the treated neighbors' indicator using an exponential weighting function, where each possible number of treated neighbors $h$ is assigned a weight of $\exp \left\{-\left(h-\sum_{k\neq j }\pi(\bX_{ik})\right)^2\right\}$.

We also include the oracle estimator that directly optimizes the empirical policy value based on the true outcome model given in Equation~\eqref{equ:sim:DGP}, using the same stochastic approximation approach.
Finally, all approximate optimization problems are solved using a gradient-based L-BFGS algorithm.

For each simulation setup, we generate 4,000 independent data sets and examine the performance of the aforementioned six estimators: the proposed \addIPW{} and \addDR{} estimators, the IPW estimators with unknown, anonymous, and no interference, and the oracle estimator.
Since the true policy value under each learned policy cannot be easily calculated analytically, we separately obtain an additional sample of 10,000 clusters from the assumed outcome model and approximate the true policy value. 

\subsection{Results}

\begin{figure}[!t]
    \centering
    \includegraphics[width=1\linewidth]{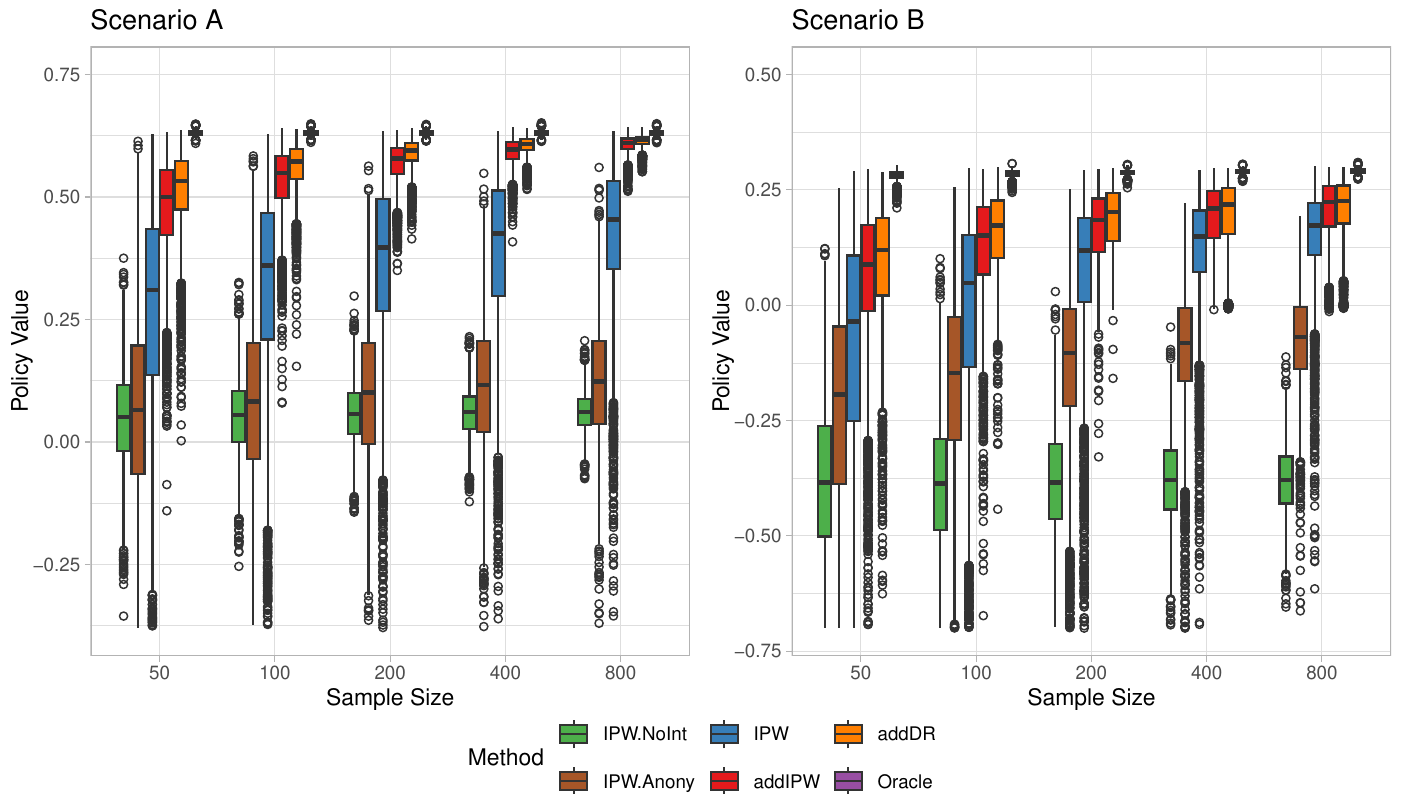}
    \caption{Boxplots of the policy value under the learned policies based on the proposed \addIPW{} (red) and \addDR{} (orange) estimators, in comparison to the standard IPW estimators with unknown interference (blue), anonymous interference (brown), and no interference (green). The value of the oracle estimator (purple) is also shown. Even when the model is misspecified (Scenario~B), the proposed policy learning methods outperform the IPW estimators, with its estimated policy values closest to those of the oracle estimator. The \addDR{} estimator further improves upon the \addIPW{} estimator by reducing variance.}
 \label{fig:sim:PL}
\end{figure}

The left panel of Figure~\ref{fig:sim:PL} shows that under Scenario~A where Assumption~\ref{ass:additive} is met, the proposed methodology, based on the \addIPW{} (red boxplot) and \addDR{} (orange) estimators, outperforms policies learned based on the standard IPW estimators.
Since the \addIPW{} and \addDR{} estimators are both unbiased in this setting, the performance of our policy learning approach converges to that of the oracle as the sample size increases. Moreover, when compared to the \addIPW{} estimator, the \addDR{} estimator provides additional variance reduction, further enhancing learning performance.
Among the IPW estimators, accounting for unknown interference (blue) improves performance but suffers from high estimation variance, making it substantially less effective than the proposed methodology. The anonymous IPW estimator (brown), while incorporating some interference effects, introduces bias by neglecting heterogeneous spillover effects in Equation~\eqref{equ:sim:DGP}. This results in the second-worst performance. As expected, the IPW estimator that ignores interference entirely (green) performs the worst.
Overall, the proposed methodology outperforms the existing IPW estimators.

The right panel of Figure~\ref{fig:sim:PL} shows the results under Scenario~B where Assumption~\ref{ass:additive} is violated and our \addIPW{} and \addDR{} estimators are biased. 
As shown in Equation~\eqref{equ:projection}, however, the proposed estimators still provide a reasonable approximation and capture a substantial proportion of the spillover effects.
We find that our approach remains superior to the existing IPW estimators due to a favorable bias-variance tradeoff. 
While the performance gaps between the proposed estimators and the IPW estimators with unknown and anonymous interference are smaller than in Scenario~A, our approach still achieves lower variance, with the \addDR{} estimator further reducing variance compared to the \addIPW{} estimator.
As before, the IPW estimators that account for interference significantly outperform the one that ignores it, highlighting the importance of modeling spillover effects in policy learning.

Efficient policy learning relies on accurate policy evaluation.
Next, we assess the finite-sample performance of the proposed estimators in estimating policy values.  We compare this performance with that of the IPW estimators that account for interference.
We use the same two simulation scenarios and consider the evaluation of three different policies: (i) linear policy: $\pi(\bX)=\mathds{1}(0.5X_1+X_2+X_3+0.5X_4\geq 2)$, (ii) depth-2 decision tree policy: $\pi(\bX)=\mathds{1}(X_1>0.5,X_3>0)$ and (iii) treat-nobody policy: $\pi(\bX)=0$.

\begin{figure}[!t]
    \centering \includegraphics[width=0.9\linewidth]{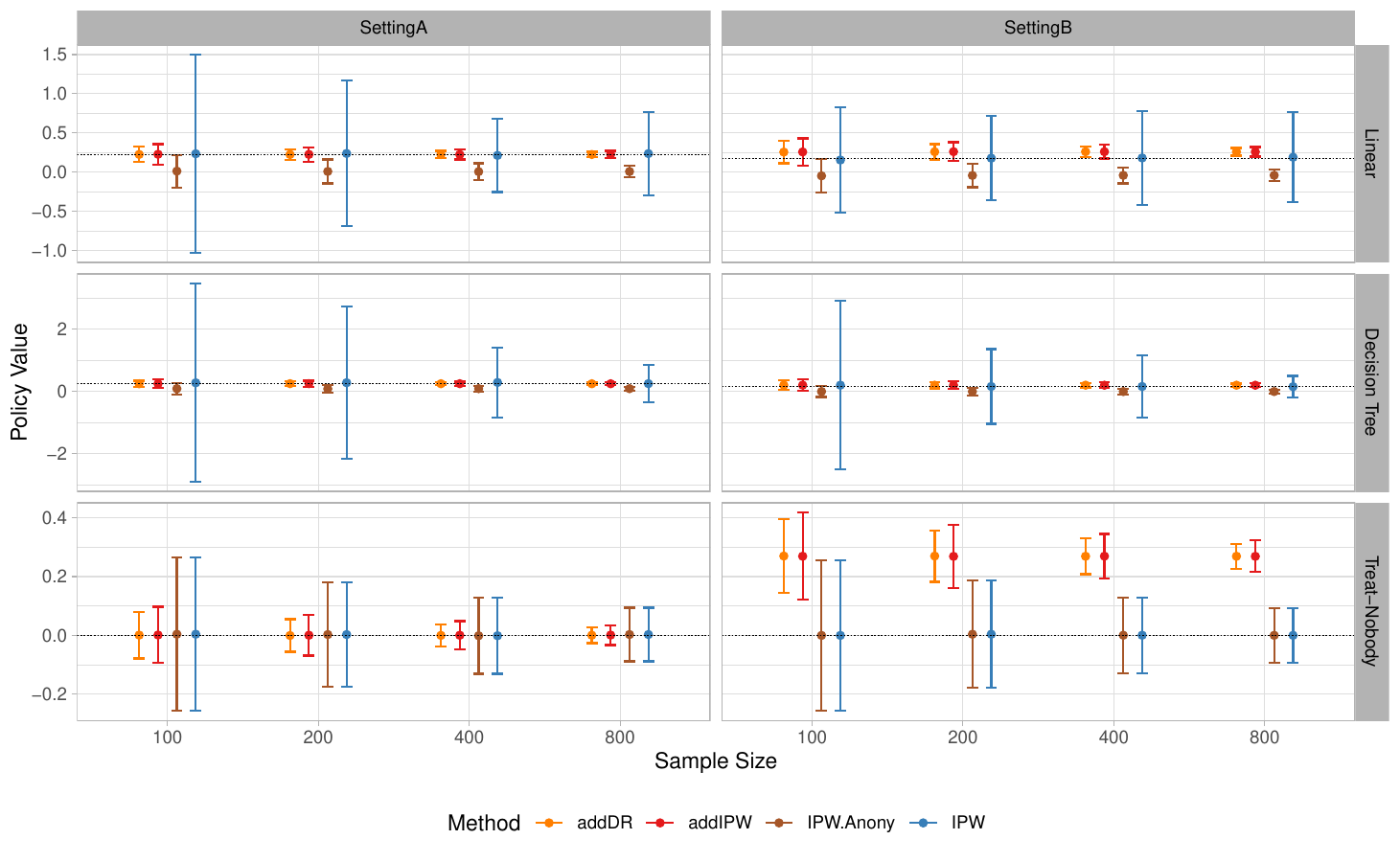}
    \caption{The performance of policy evaluation methodology based on the proposed \addIPW{} (red) and \addDR{} (orange) estimators, as well as the existing IPW estimators with unknown interference (blue) and anonymous interference (brown). The dots represent the average performance over simulations while the lines indicate the one standard deviation above and below the mean. The true policy value (dashed black line) is calculated based on Monte-Carlo simulations.}
    \label{fig:sim:PE}
\end{figure}

Figure~\ref{fig:sim:PE} shows the results, with each plot corresponding to a different combination of policy class (rows) and a data generation scenario (columns).
The results for the sample size of $n=50$ are omitted due to large variability.
Overall, the proposed \addIPW{} (red) and \addDR{} (orange) estimators exhibit a substantial advantage over the standard IPW estimators with unknown (blue) and anonymous (brown) interference in terms of standard deviation, across the settings we have considered.

In cases where Assumption~\ref{ass:additive} is satisfied (Scenario~A; left panel), our proposed estimators and the IPW estimator with unknown interference provide unbiased estimates of the policy value. However, the \addIPW{} and \addDR{} estimators exhibit considerably lower variability compared to the IPW estimator with unknown interference.
In contrast, when Assumption~\ref{ass:additive} is violated (Scenario~B; right panel), the proposed estimators may deviate from the true policy value due to model misspecification.
The degree of bias varies across policies with the most substantial bias arising for evaluating the treat-nobody policy.
Meanwhile, the IPW estimator with unknown interference remains unbiased in both scenarios. The anonymous IPW estimator, however, is biased in both settings, with variance between those of our proposed estimators and the IPW estimator with unknown interference. An exception occurs when evaluating the treat-nobody policy, where the anonymous IPW estimator exactly reduces to the IPW estimator with unknown interference.

In addition, the differences in variance between our proposed estimators and the existing IPW estimators are much greater for linear and decision-tree policies than for the treat-nobody policy.
This is because the variance of a policy value estimator depends on the deviation between the baseline policy (i.e., following propensity score) and the (deterministic) policy to be evaluated.

If the true optimal policy significantly deviates from the baseline policy, the IPW estimator with unknown interference can yield a highly variable estimate of the optimal policy value.
This in turn can negatively impact the downstream empirical policy optimization.
This bias-variance tradeoff is evident in our findings about policy learning (Figure~\ref{fig:sim:PL}), where learned policies based on the proposed estimators are more robust than those based on the existing IPW estimators, even when Assumption~\ref{ass:additive} is not met.

\section{Empirical Application}\label{sec:app}

We illustrate our methodology by applying it to a randomized experiment from a conditional cash transfer program in Colombia \citep{barrera2011improving}.

\subsection{Setup}

The experiment was conducted in two regions of Bogota, Colombia: San Cristobal and Suba. 
In each region, the researchers recruited households that have one to five schoolchildren, and within each household, the children were randomized to enroll in the cash transfer program. 
Specifically, the researchers stratified children based on locality (San Cristobal or Suba), type of school (public or private), gender, and grade level. 
While within each stratum, every child had an equal probability of receiving treatment, the treatment assignment probabilities varied across strata.
The original treatment randomization probability for each stratum is known, and is on average $0.63$ for children in San Cristobal and  $0.45$ for those in Suba.
Since randomization was based on fine strata that include gender and grade level, almost all children within each household belong to different strata.
Therefore, we can assume that the treatment assignment mechanisms for children are conditionally independent of one another and satisfy Assumption~\ref{ass:factorCPS}.

Previous studies focused on estimating the effects of the conditional cash transfer program on the attendance rate of students.
The program was designed such that enrolled students received cash subsidies if they attended school at least 80\% of the time in a given month.
For example, the original study estimated the spillover effects on the siblings of an enrolled student in the same household \citep{barrera2011improving}.
In our application, we analyze the same dataset (a total of 1010 households with 2129 students) as the one examined by \cite{park2022efficient} who developed and applied semiparametric efficient estimators to estimate the direct and spillover effects of the program on school attendance rates.

In contrast to these previous studies, however, we focus on learning an optimal individualized treatment rule to maximize the average household-level school attendance rate.
We consider the following linear policy class based on three pre-treatment variables: student’s grade, household's size, and household’s poverty score (with lower scores indicating poorer households):
\[\pi(\bX_{ij}) = \mathds{1}\{\beta_0+\beta_1\times\text{student's grade}+\beta_2\times \text{household's size}
+\beta_3 \times\text{household’s poverty score}\geq 0\}.
\]

In order to evaluate the performance of the learned individualized policy, we randomly split the data into $K=5$ folds $\mathcal{I}_k$, $k=1,\ldots,K$, and for each fold $k$, learn an optimal individualized treatment policy $\hat{\pi}^{(-k)}(\cdot)$ using all but the $k$-th fold data.
We incorporate the cost of treatment into the objective function to ensure that the optimal treatment rule is not trivial \citep{athey2021policy}.
We then evaluate the the learned policy using the $k$th fold, and compute the overall empirical policy value by averaging over the resulting five estimates.
Specifically, we use the following policy value estimator with a varying treatment cost $C$ ($15\%$, $20\%$ and $25\%$ of the school-attendance rate benefit, which ranges from 0 to 1).
\begin{equation}\label{equ:app:evalV}
  \widehat{V}= \frac{1}{n}\sum_{k=1}^{K}\sum_{i\in\mathcal{I}_k}\ \overline{\bY}_{i} \left[\sum_{j=1}^{M_i}
  \left( \frac{\mathds{1}\left\{
       A_{ij}=\hat{\pi}^{(-k)}(\bX_{ij})
     \right\}}{  e_j(
\hat{\pi}^{(-k)}(\bX_{ij})
      \mid \bX_{i})}
  -1\right)+1\right]-C\times \frac{\sum_{j=1}^{M_i}\hat{\pi}^{(-k)}(\bX_{ij})}{M_i}.  
\end{equation}

\subsection{Results}

\begin{table}[t!]
    \centering
\begin{tabular}{l|cccc}
\hline
\hline
 & \multicolumn{3}{|c}{ $\begin{array}{cc} \text{Estimated policy value}\\ \text{(Treated Proportion)} \end{array}$ } \\ 
 \cline{2-4}
Method & cost: $15\%$ &  cost: $20\%$ & cost: $25\%$ \\ 
\hline \addIPW{}  &   0.927 (0.542) &   0.904 (0.475) &  0.897 (0.478) \\
IPW with unknown interference &   0.927  (0.618) &  0.867 (0.485) &  0.831 (0.436) \\
IPW with anonymous interference &   0.905   (0.556) &   0.855 (0.455) &  0.828 (0.514) \\
IPW with no interference & 0.872 (0.359) &   0.847 (0.277) &  0.808 (0.124) \\
\hline
\end{tabular}  
\caption{Estimated values of the learned policies under different treatment costs. The number in parentheses represents the proportion of children assigned to the treatment.  The proposed individualized policy learning methodology (based on \addIPW{}) outperforms the learned policies based on the standard IPW estimators, yielding the highest policy values. }
\label{tab:estV}
\end{table}

Table~\ref{tab:estV} presents the estimated values of the learned individualized policies based on our proposed \addIPW{} estimator and those based on the standard IPW estimators with unknown, anonymous, and no interference. 
While both \addIPW{} and \addDR{} estimators are consistent when propensity scores are known, we exclude the \addDR{} estimator in this application due to poor fit of the outcome model, leading to a degenerate learned policy. 

We find that the proposed policy learning methodology consistently achieves the best performance across all treatment cost specifications, yielding the highest estimated policy values. In this example, since each cluster contains only a small number of units (mostly 2 or 3), the standard IPW estimator with unknown interference does not suffer from a significantly large estimation variance and performs reasonably well. As a result, its estimated policy value and the proportion of treated units (reported in parentheses) are relatively close to those of our proposed method. The IPW estimator with anonymous interference performs slightly worse, as it fails to account for heterogeneous spillover effects within clusters. In contrast, the learned policy based on the standard IPW estimator that ignores interference performs the worst in all cases, yielding the lowest policy values and treating the smallest proportion of units.

\begin{table}[t]
    \centering
    \begin{tabular}{lrrr}
\hline \hline
 & grade level & household size & poverty score \\
\hline
cost: $15\%$ &   \\
\addIPW{} &   $-$3.914 & 0.721 & $-$3.425 \\ 
IPW with unknown interference  &  $-$0.977 & 0.754 & $-$0.515 \\ 
IPW with anonymous interference & $-$1.040 & 0.972 & $-$0.790 \\ 
IPW with no interference  &  $-$1.918 & 2.376 & $-$0.884 \\ \\ 
cost: $20\%$ &   \\ 
\addIPW{}  &  $-$4.009 & 0.744 & $-$3.484 \\ 
IPW with unknown interference  & $-$3.974 & 0.769 & $-$3.483 \\ 
IPW with anonymous interference & $-$1.039 & 0.972 & $-$0.789 \\ 
IPW with no interference  & 0.523 & 0.166 & $-$0.816 \\ \\
cost: $25\%$ &   \\
\addIPW{}  & $-$1.921 & 2.441 & $-$0.876 \\ 
IPW with unknown interference & $-$2.303 & 2.842 & $-$0.265 \\ 
IPW with anonymous interference & $-$3.985 & 0.751 & $-$3.466 \\ 
IPW with no interference  & 0.411 & 0.273 & $-$0.598 \\ 
\hline
\end{tabular}
 \caption{Estimated coefficients of the linear optimal policy under all methodologies and all treatment cost specifications, normalized by the absolute magnitude of intercept. The learned policies tend to prioritize children who are in lower grade levels and live in larger and poorer households. 
} \label{tab:coef}
\end{table}

We also report the estimated coefficients of the linear policies in Table~\ref{tab:coef}, normalized by the absolute magnitude of the estimated intercept $|\hat\beta_0|$.
Here, we use the entire data to learn a single policy under each methodology.
We find that, regardless of costs, the optimal ITR tends to negatively depend on the student's grade and the household's poverty score, while it has a positive relationship with household's size across all methodologies. 
This implies that optimal policies tend to give priority to children who are in a lower grade level and reside in larger and poorer households.

However, the estimated optimal policy based on the standard IPW estimator without interference can yield qualitatively different results.
For example, when the cost is 20\% and 25\%, the estimated coefficients for the grade level become positive, implying the opposite treatment recommendation rule.
Furthermore, although the standard IPW estimator with interference yields the estimated coefficients of the same sign as those based on our methodology, its statistical inefficiency may have produced lower policy values, as shown in Table~\ref{tab:estV}.

\section{Concluding Remarks}\label{sec:conclusion}

In this paper, we propose new policy evaluation and learning methodologies under clustered network interference.
We introduce the semiparametric additive effect model that flexibly captures heterogeneous spillover effects among individuals while avoiding restrictive assumptions used in the existing literature.
Our proposed \addIPW{} and \addDR{} estimators for policy evaluation exploit this structural assumption and substantially improve statistical efficiency relative to the standard IPW estimator.
Theoretically, we establish regret bounds for the learned ITR that achieve optimal rates with respect to sample size, under both known and unknown propensity score settings.

The empirical results demonstrate the importance of considering individual-level network information in the policy learning problem.
Consistent with our theoretical analysis, we find that even when our assumption is violated, the proposed estimators outperform the standard IPW estimators.
Thus, the proposed methodology achieves a desirable degree of bias-variance tradeoff by leveraging a structural assumption that is sufficiently informative but is not too restrictive.

An interesting future direction is to extend our methodology to more general network settings, such as incorporating weak dependencies between clusters or adapting it to a single-network setup.
Our proposed semiparametric additive model could be modified to accommodate these interference structures, relaxing the current restrictive assumptions on spillover effects.
While these extensions may require a different theoretical analysis framework, they would further enhance the applicability of individualized policy learning in real-world settings with complex interdependence.

\bigskip

\bibliographystyle{chicago}
\bibliography{ref.bib}

\newpage
\appendix
\renewcommand\thefigure{\thesection.\arabic{figure}}    
\setcounter{figure}{0}

\begin{center}
 \huge Supplementary Appendix 
\end{center}

\etocdepthtag.toc{mtappendix}
\etocsettagdepth{mtchapter}{none}
\etocsettagdepth{mtappendix}{subsection}
\tableofcontents

\section{Proofs}\label{appendix:proof}

\subsection{Proof of Proposition~\ref{prop:unbiased}}\label{appendix:proof:unbiasedness}
\begin{proof}
For notational simplicity, we define the cluster-level average nuisance functions of Equation~\eqref{equ:additive} as $$\overline{\bg}(\bX_i)=\frac{1}{M_i}\sum_{j=1}^{M_i}\bg_{j}(\bX_i):=\left(\overline{g}^{(0)}(\bX_i),\overline{g}^{(1)}(\bX_i),\ldots,\overline{g}^{(M_i)}(\bX_i)\right)^\top.$$
We start by examining the expectation of the proposed estimator,
\begin{equation}
\begin{aligned}\label{equ:prf0}
    \mathbb{E}[ \widehat{V}^{\addIPW}(\pi) ]=& \mathbb{E}\left[ \frac{1}{n}\sum_{i=1}^{n}\overline{\bY}_{i}\left\{ \sum_{j=1}^{M_i}\left(
   \frac{\mathds{1}\left\{
       A_{ij}=\pi(\bX_{ij})
     \right\}}{  e_j(
 \pi(\bX_{ij})
      \mid \bX_{i})}
  -1\right)+1\right\}\right]\\
  =& \mathbb{E}\left[ \frac{1}{n}\sum_{i=1}^{n} \mathbb{E} \left[\overline{\bY}_{i}\sum_{j=1}^{M_i}\left( 
   \frac{\mathds{1}\left\{
       A_{ij}=\pi(\bX_{ij})
     \right\}}{  e_j(
 \pi(\bX_{ij})
      \mid\bX_{i})}
  -1\right) \ \Biggl | \ \bX_i\right] +\mathbb{E} \left[\overline{\bY}_{i} \mid \bX_i\right]\right].
\end{aligned}
\end{equation}
We can further compute, 
\begin{align}
 & \mathbb{E} \left[\overline{\bY}_{i}
   \frac{\mathds{1}\left\{
       A_{ij}=\pi(\bX_{ij})
     \right\}}{  e_j(
 \pi(\bX_{ij})
      \mid\bX_{i})}
 \ \Biggl | \ \bX_i\right]  \nonumber\\
 = \ &  \mathbb{E} \left[ \overline{\bY}_{i}(\bA_{i(-j)},A_{ij}=\pi(\bX_{ij}))
   \frac{\mathds{1}\left\{
       A_{ij}=\pi(\bX_{ij})
     \right\}}{  e_j(
 \pi(\bX_{ij})
      \mid \bX_{i})}
 \ \Bigl | \ \bX_i\right] \nonumber\\
 = \ & \mathbb{E} \left[ \overline{\bY}_{i}(\bA_{i(-j)},A_{ij}=\pi(\bX_{ij}))  \mid \bX_i\right]
  \mathbb{E} \left[  \frac{\mathds{1}\left\{
       A_{ij}=\pi(\bX_{ij})
     \right\}}{  e_j(
 \pi(\bX_{ij})
      \mid\bX_{i})}
 \ \Bigl | \ \bX_i\right] \nonumber\\
= \ & \mathbb{E} \left[ \E \left[  \overline{\bY}_{i}(\bA_{i(-j)},A_{ij}=\pi(\bX_{ij})) \mid \bA_{i(-j)}, A_{ij}=\pi(\bX_{ij}),\bX_i \right] \mid \bX_i\right]\nonumber\\
= \ & \overline{g}^{(0)}(\bX_i)+\sum_{k=1,k\neq j}^{M_i}\overline{g}^{(k)}(\bX_i)e_k(1\mid \bX_i)+\overline{g}^{(j)}(\bX_i)\pi(\bX_{ij}), \label{equ:prf1}
\end{align}
where the first equality is due to consistency, the second equality follows from unconfoundedness (Assumption~\ref{ass:DGP}(a)) and factorization of propensity scores (Assumption~\ref{ass:factorCPS}),
the third equality follows from the tower property and the fact that $\bY_i\left(\ba_i\right) \indep A_{ij} \mid \bA_{i(-j)},\bX_i$ (due to Assumptions~\ref{ass:DGP}(a)~and~\ref{ass:factorCPS}) and the final equality holds under the additive outcome model assumption (Assumption~\ref{ass:additive}).
In addition, Assumption~\ref{ass:additive} implies the following equality,
\begin{equation}
\begin{aligned}\label{equ:prf2}
\mathbb{E} \left[\overline{\bY}_{i} \mid \bX_i\right] &=  \overline{g}^{(0)}(\bX_i)+\sum_{k=1}^{M_i}\overline{g}^{(k)}(\bX_i)e_k(1\mid \bX_i).
\end{aligned}
\end{equation}
Plugging Equations~\eqref{equ:prf1}~and\eqref{equ:prf2} into 
Equation~\eqref{equ:prf0}, we obtain the following desired result,
\begin{equation*}
\begin{aligned}
  \mathbb{E}[ \widehat{V}^{\addIPW}(\pi) ] \
 = \ &\mathbb{E}\left[  M_i\times \overline{g}^{(0)}(\bX_i)+(M_i-1)\sum_{k=1}^{M_i}\overline{g}^{(k)}(\bX_i)e_k(1\mid \bX_i)+\sum_{j=1}^{M_i}\overline{g}^{(j)}(\bX_i)\pi(\bX_{ij})\right.\\
 &\quad\quad\left.-(M_i-1)\left(\overline{g}^{(0)}(\bX_i)+\sum_{k=1}^{M_i}\overline{g}^{(k)}(\bX_i)e_k(1\mid \bX_i) \right)\right]\\
=\ & \mathbb{E}\left[\overline{g}^{(0)}(\bX_i)+\sum_{j=1}^{M_i}\overline{g}^{(j)}(\bX_i)\pi(\bX_{ij})\right]\\
=\ & V(\pi).
\end{aligned}
\end{equation*}
where the last equality uses the law of iterated expectation under the modeling assumption (Assumption~\ref{ass:additive}).
\end{proof}

\subsection{Proof of Theorem~\ref{thm:regret}}\label{appendix:proof:finiteregretbound}

\begin{definition}[Rademacher complexity]
Let $\mathcal{F}$ be a family of functions mapping $\mathcal{O} \mapsto \mathbb{R}$. The population Rademacher complexity of $\mathcal{F}$ is defined as
$$\mathcal{R}_{n}(\mathcal{F}) \equiv \mathbb{E}_{O, \varepsilon}\left[\sup _{f \in \mathcal{F}}\left|\frac{1}{n} \sum_{i=1}^{n} \varepsilon_{i} f\left(O_{i}\right)\right|\right]$$
where $\varepsilon_{i}$'s are i.i.d. Rademacher random variables, i.e., $\operatorname{Pr}\left(\varepsilon_{i}=1\right)=\operatorname{Pr}\left(\varepsilon_{i}=-1\right)=1 / 2$, and the expectation is taken over both the Rademacher variables $\varepsilon_{i}$ and the i.i.d. random variables $O_{i}$. 
\end{definition}

\begin{proof}
    By definition, the regret of $\pihat$ relative to $\piast$ is given by
    \begin{equation*}
        \begin{aligned}
             R(\pihat)=V(\piast)-V(\pihat)&=V(\piast)-\widehat{V}(\piast)+\widehat{V}(\piast)-\widehat{V}(\pihat)+\widehat{V}(\pihat)-V(\pihat)\\
             &\leq V(\piast)-\widehat{V}(\piast)+\widehat{V}(\pihat)-V(\pihat)\\
             &\leq 2\sup_{\pi\in\Pi}\lvert \widehat{V}(\pi)-V(\pi) \rvert.
        \end{aligned}
    \end{equation*}
Similar to the standard policy learning literature, our approach is to bound
the above worst-case estimation error, $\sup_{\pi\in\Pi}\lvert \widehat{V}(\pi)-V(\pi) \rvert$.
Define
\begin{equation*}
   f(O_i;\pi)=\left[\overline{\bY}_{i} \left\{\sum_{j=1}^{M_i}
  \left( \frac{\mathds{1}\left\{
       A_{ij}=\pi(\bX_{ij})
     \right\}}{  e_j(
 \pi(\bX_{ij})
      \mid\bX_{i})}
  -1\right)+1\right\} \right],
\end{equation*}
and the function class on $\mathcal{O}$ as $\mathcal{F}=\left\{ f(\cdot;\pi) \mid \pi\in\Pi \right\}$.
Proposition~\ref{prop:unbiased} implies,
$$
\sup _{\pi \in \Pi}|\widehat{V}(\pi)-V(\pi)|=\sup _{f \in \mathcal{F}}\left|\frac{1}{n} \sum_{i=1}^n f\left(O_i\right)-\mathbb{E}[f(O)]\right|.
$$
From Assumptions~\ref{ass:factorCPS}~and~\ref{ass:regret:bound}, the class $\mathcal{F}$ is uniformly bounded by a constant $C=B[m_{\max}\times(\frac{1}{\eta}-1)+1]$. Then by Theorem 4.10 in \cite{wainwright2019high},
$$
\sup _{f \in \mathcal{F}}\left|\frac{1}{n} \sum_{i=1}^n f\left(O_i\right)-\mathbb{E}[f(O)]\right| \leq 2 \mathcal{R}_n(\mathcal{F})+\delta
$$
with probability at least $1-\exp \left(-\frac{n \delta^2}{2 C^2}\right)$.
It suffices to bound the Rademacher complexity for $\mathcal{F}$, $\mathcal{R}_n(\mathcal{F})$, given by 
$$
\begin{aligned}
\mathcal{R}_{n}(\mathcal{F}) \ \leq \ & \mathbb{E}_{O, \epsilon}\left[\left| \frac{1}{n} \sum_{i=1}^{n} \epsilon_{i} \overline{\bY}_{i} \left\{\sum_{j=1}^{M_i}
  \left(\frac{1-A_{ij}}{  e_j(0\mid\bX_{i})}
  -1\right)+1\right\} \right|\right] \\
& +  \mathbb{E}_{O, \varepsilon}\left[\sup _{\pi \in \Pi}\left| \frac{1}{n} \sum_{i=1}^{n}  \epsilon_{i}\overline{\bY}_{i} \sum_{j=1}^{M_i} \pi(\bX_{ij})\left(\frac{A_{ij}}{ e_j(1\mid\bX_{i})}- \frac{1-A_{ij}}{ e_j(0\mid\bX_{i})}\right) \right|\right] \\
\ \leq \ & \frac{C}{n} \sqrt{n}
+\frac{B}{\eta} \mathbb{E}_{\bX,M, \varepsilon}\left[\sup _{\pi \in \Pi}\left|\frac{1}{n} \sum_{i=1}^{n}  \varepsilon_{i}\sum_{j=1}^{M_i} \pi(\bX_{ij})\right|\right].\\
\ \leq \ & \frac{C}{\sqrt{n}}
+\frac{B}{\eta} \mathbb{E}_{\bX,M, \varepsilon}\left[\sup _{\pi \in \Pi}\left|\frac{1}{n} \sum_{i=1}^{n}  \varepsilon_{i}\sum_{j=1}^{M_i} \pi(\bX_{ij})\right|\right],\\
\end{aligned}
$$
where the second inequality follows from the application of the Khintchine inequality to the first term and that of the contraction inequality for Rademacher complexities to the second term, respectively.

To bound the second term above, we utilize the upper bound $m_{\max }$ on cluster size given in Assumption~\ref{ass:regret:bound}(b). We append zeros to the cluster-level covariates $\bX_{i}$ (i.e., $\bX_{ij}=\emptyset$ for $j=M_i+1,\ldots,m_{\max}$) so that each cluster has the same length of covariate vectors, and define $\pi(\emptyset)=0$.
Therefore, we can bound it by
\begin{equation*}
    \begin{aligned}
     \mathbb{E}_{\bX,M, \varepsilon}\left[   \sup _{\pi \in \Pi}\left|\frac{1}{n} \sum_{i=1}^{n}  \varepsilon_{i}\sum_{j=1}^{m_{\max}} \pi(\bX_{ij})\right|\right]
     \leq & \mathbb{E}_{\bX,M, \varepsilon}\left[\sup _{\pi \in \Pi} \sum_{j=1}^{m_{\max}} \left|\frac{1}{n} \sum_{i=1}^{n}  \varepsilon_{i}\pi(\bX_{ij})\right|\right]\\
     \leq & \mathbb{E}_{\bX,M, \varepsilon}\left[\sum_{j=1}^{m_{\max}}\sup _{\pi \in \Pi}  \left|\frac{1}{n} \sum_{i=1}^{n}  \varepsilon_{i}\pi(\bX_{ij})\right|\right]\\
       \leq&\sum_{j=1}^{m_{\max}}\mathbb{E}_{\bX,M, \varepsilon}\left[\sup _{\pi \in \Pi} \left|\frac{1}{n} \sum_{i=1}^{n}  \varepsilon_{i}\pi(\bX_{ij})\right|\right]\\
       \leq &  c_0 m_{\max}\sqrt{\frac{\nu}{n}},
    \end{aligned}
\end{equation*}
where the first inequality is due to triangle inequality,
and the last inequality follows from the fact that for a function class $\mathcal{G}$ with finite VC dimension $\nu<\infty$, the Rademacher complexity is bounded by, $\mathcal{R}_{n}(\mathcal{G}) \leq c_0 \sqrt{\nu / n}$, for some universal constant $c_0$ \citep[][$\S 5.3.3$]{wainwright2019high}. 
Combining the above, we obtain $\mathcal{R}_{n}(\mathcal{F})\leq \frac{C}{\sqrt{n}}+c_0\frac{Bm_{\max}}{\eta}\sqrt{\frac{\nu}{n}}$. This completes the proof.
\end{proof}

\subsection{Proof of Theorem~\ref{thm:semieff}}\label{appendix:pf:semipara}
\begin{proof}

We first derive the semiparametric efficiency bound for a policy value estimate. We use a result from \cite{chamberlain1992efficiency} who
derives the efficiency bound under conditional moment restrictions with a unknown nonparametric component. 
Assumption~\ref{ass:additive} can be written as the following conditional moment restriction,
\[
 \E\left[\bY-G(\bX) \phi(\bA)\mid \bA,\bX\right]=\mathbf{0}.
\]
Based on the results in \cite{chamberlain1992efficiency}, we can directly show that  under the additive semiparametric model assumption, the efficiency bound  for estimating $V(\pi)$ is:
 \begin{equation*}
        \begin{aligned}
         &\Var\left[w(M)^\top G(\bX)\phi(\pi(\bX))\right]+\\
         &\E\left[ \phi(\pi(\bX))^\top \left( \E\left[\phi(\bA) \Var\left[ w(M)^\top \left(\bY-G(\bX)\phi(\bA) \right) \mid \bA,\bX \right]^{-1}\phi(\bA)^\top \mid \bX \right] \right)^{-1} \phi(\pi(\bX)) \right]\\
        \end{aligned}
\end{equation*}
Under the homoskedastivity assumption, $\mathbb{E}\left[w(M)^\top(\bY- \mu(\bA,\bX) )(\bY- \mu(\bA,\bX) )^\top w(M) \mid \bA,\bX \right]=\sigma^2$ where $\sigma^2$ is a constant.  Thus, this efficiency bound becomes:
\begin{equation}\label{equ:semi:effbound}
    \begin{aligned}
         &\Var\left[w(M)^\top G(\bX)\phi(\pi(\bX))\right]+\E\left[ \sigma^2 \phi(\pi(\bX))^\top \left( \E\left[\phi(\bA) \phi(\bA)^\top \mid \bX \right] \right)^{-1} \phi(\pi(\bX)) \right]\\
          =&\Var\left[w(M)^\top G(\bX)\phi(\pi(\bX))\right]+\E\left[ \sigma^2 \phi(\pi(\bX))^\top \Sigma(\bX)^{-1} \phi(\pi(\bX)) \right].
    \end{aligned}
\end{equation}

Next, we consider the oracle doubly robust estimator using the true nuisance parameters $G(\bX)$ and $\Sigma(\bX)$, denoted as $\widetilde{V}^{\addDR}(\pi)$. We first show that the efficiency bound derived above in Equation~\eqref{equ:semi:effbound} is equal to the asymptotic variance of the oracle doubly robust estimator:
\begin{equation*}
    \begin{aligned}
        &\Var\left[w(M)^\top G_\addDR(\bY,\bX,M)\phi(\pi(\bX))\right]\\
        = &\Var\left[w(M)^\top G(\bX)\phi(\pi(\bX))\right]+ \Var\left[ w(M)^\top \left(\bY-G(\bX)\phi(\bA) \right)\phi(\bA)^\top \Sigma(\bX)^{-1} \phi(\pi(\bX)) \right]\\
        =& \Var\left[w(M)^\top G(\bX)\phi(\pi(\bX))\right]+\\
        &\E\left[\Var\left[ w(M)^\top \left(\bY-G(\bX)^\top\phi(\bA) \right)\mid  \bA,\bX \right] (\phi(\bA)^\top \Sigma(\bX)^{-1} \phi(\pi(\bX)))^2\right]\\
        =& \Var\left[w(M)^\top G(\bX)^\top\phi(\pi(\bX))\right]+\\    &\E\left[\sigma^2\phi(\pi(\bX))^\top\Sigma(\bX)^{-1} \E[\phi(\bA)\phi(\bA)^\top \mid \bX,M]\Sigma(\bX)^{-1} \phi(\pi(\bX))\right]\\
        =& \Var\left[w(M)^\top G(\bX)^\top\phi(\pi(\bX))\right]+\E\left[\sigma^2 \phi(\pi(\bX))^\top\Sigma(\bX)^{-1}  \phi(\pi(\bX))\right].\\
    \end{aligned}
\end{equation*}
 
Thus, it suffices to show  $\widetilde{V}^{\addDR}(\pi)-\widehat{V}^{\addDR}(\pi)=o_p(1/\sqrt{n})$.
We have the following decomposition:
\begin{equation*}
    \begin{aligned}
        \widetilde{V}^{\addDR}(\pi)-\widehat{V}^{\addDR}(\pi)  = &  \frac{1}{n}\sum_{i=1}^{n} \left(\hat{\overline{\bg}}(\bX_i)-\overline{\bg}(\bX_i)\right)^\top \left(\phi(\bA_i)\phi(\bA_i)^\top\Sigma(\bX)^{-1} -1\right)\phi(\pi(\bX_i))\\
        & +\frac{1}{n}\sum_{i=1}^{n}  \left(\overline{\bY}_{i}- \overline{\bg}^\top(\bX_i)\phi(\bA_i)\right)\phi(\bA_i)^\top \left( {\Sigma}(\bX_i)^{-1} -\widehat{\Sigma}(\bX_i)^{-1} \right)\phi(\pi(\bX_i)) \\
        & + \frac{1}{n}\sum_{i=1}^{n} \left(\hat{\overline{\bg}}(\bX_i)-\overline{\bg}(\bX_i)\right)^\top \phi(\bA_i)\phi(\bA_i)^\top  \left( \widehat{\Sigma}(\bX_i)^{-1} -{\Sigma}(\bX_i)^{-1} \right)\phi(\pi(\bX_i)).        
    \end{aligned}
\end{equation*}

We study each terms in the above expression separately.  Since clusters are i.i.d, the first term has mean zero:
\begin{equation*}
    \begin{aligned}
         & \E\left[\left(\hat{\overline{\bg}}(\bX_i)-\overline{\bg}(\bX_i)\right)^\top \left(\phi(\bA_i)\phi(\bA_i)^\top\Sigma(\bX)^{-1} -1\right)\phi(\pi(\bX_i))\right] \\
         =& \E\left[\left(\hat{\overline{\bg}}(\bX_i)-\overline{\bg}(\bX_i)\right)^\top \left(\Sigma(\bX)\Sigma(\bX)^{-1} -1\right)\phi(\pi(\bX_i))\right] \\
         = & 0.
    \end{aligned}
\end{equation*}
and its second moment
\begin{equation*}
    \begin{aligned}
           &\  \E\left[\left(\frac{1}{n}\sum_{i=1}^{n} \left(\hat{\overline{\bg}}(\bX_i)-\overline{\bg}(\bX_i)\right)^\top \left(\phi(\bA_i)\phi(\bA_i)^\top\Sigma(\bX)^{-1} -1\right)\phi(\pi(\bX_i))\right)^2\right] \\
           = & \ \Var\left[\frac{1}{n}\sum_{i=1}^{n} \left(\hat{\overline{\bg}}(\bX_i)-\overline{\bg}(\bX_i)\right)^\top \left(\phi(\bA_i)\phi(\bA_i)^\top\Sigma(\bX)^{-1} -1\right)\phi(\pi(\bX_i))\right] \\
           = & \ \frac{1}{n} \Var\left[\left(\hat{\overline{\bg}}(\bX_i)-\overline{\bg}(\bX_i)\right)^\top \left(\phi(\bA_i)\phi(\bA_i)^\top\Sigma(\bX)^{-1} -1\right)\phi(\pi(\bX_i))\right] \\
           = & \  \frac{1}{n} \E\left[\left( \left(\hat{\overline{\bg}}(\bX_i)-\overline{\bg}(\bX_i)\right)^\top \left(\phi(\bA_i)\phi(\bA_i)^\top\Sigma(\bX)^{-1} -1\right)\phi(\pi(\bX_i))\right)^2\right] \\
           \lesssim &\ \frac{1}{n} \E\left[\left(\hat{\overline{\bg}}(\bX_i)-\overline{\bg}(\bX_i)\right)^\top\left(\hat{\overline{\bg}}(\bX_i)-\overline{\bg}(\bX_i)\right) \right] \\
           = & \ \frac{ O_{p}\left(r_{n,g}^2\right)}{n} = \frac{o_p(1)}{n}
    \end{aligned}
\end{equation*}
where the second-to-last line is due to the propensity scores being strictly bounded away from zero and one as well as the Cauchy-Schwarz inequality, and the last line is due to Assumption~\ref{ass:rate}.
By Chebyshev's inequality, the first term can be bounded by $o_p(1/\sqrt{n})$.
The second term in our decomposition can also be bounded similarly.
Finally, for the last term, we simply use Cauchy-Schwarz: \allowdisplaybreaks
\begin{equation*} 
\begin{aligned} 
           & \E\left[\left\lvert\frac{1}{n}\sum_{i=1}^{n} \left(\hat{\overline{\bg}}(\bX_i)-\overline{\bg}(\bX_i)\right)^\top \phi(\bA_i)\phi(\bA_i)^\top  \left( \widehat{\Sigma}(\bX_i)^{-1} -{\Sigma}(\bX_i)^{-1} \right)\phi(\pi(\bX_i))\right\rvert\right] \\
           \leq & \ \E\left[ \left\|\frac{1}{\sqrt{n}}\sum_{i=1}^{n}\left(\hat{\overline{\bg}}(\bX_i)-\overline{\bg}(\bX_i)\right)^\top\phi(\bA_i)\right\|_2 \left\|\frac{1}{\sqrt{n}}\sum_{i=1}^{n} \phi(\bA_i)^\top  \left( \widehat{\Sigma}(\bX_i)^{-1} -{\Sigma}(\bX_i)^{-1} \right)\phi(\pi(\bX_i))\right\|_2\right] \\
           \leq & \ \sqrt{\E\left[ \frac{1}{n}\sum_{i=1}^{n}\left(\left(\hat{\overline{\bg}}(\bX_i)-\overline{\bg}(\bX_i)\right)^\top\phi(\bA_i)\right)^2\right]}\times  \sqrt{\E\left[ \frac{1}{n}\sum_{i=1}^{n} \left( \phi(\bA_i)^\top  \left( \widehat{\Sigma}(\bX_i)^{-1} -{\Sigma}(\bX_i)^{-1} \right)\phi(\pi(\bX_i))  \right)^2\right] }\\
           = & \ \sqrt{\E\left[\left(\left(\hat{\overline{\bg}}(\bX_i)-\overline{\bg}(\bX_i)\right)^\top\phi(\bA_i)\right)^2\right]}\times  \sqrt{\E\left[ \left( \phi(\bA_i)^\top  \left( \widehat{\Sigma}(\bX_i)^{-1} -{\Sigma}(\bX_i)^{-1} \right)\phi(\pi(\bX_i))  \right)^2\right] }\\
          =  & \sqrt{O_p(r_{n,g}^2)} \sqrt{O_p(r_{n,\Sigma}^2)} = o_p\left(\frac{1}{\sqrt{n}}\right)
    \end{aligned}
\end{equation*} 
by Assumption~\ref{ass:rate}. Here, we establish the result by leveraging the implicit regularity condition that the matrix \( {\Sigma}(\bX) \) is nonsingular and well-conditioned, and that \( \widehat{\Sigma}(\bX) \) converges to \( {\Sigma}(\bX) \) at a sufficient rate to ensure \( \widehat{\Sigma}(\bX) \) remains nonsingular. Consequently, \( \widehat{\Sigma}(\bX_i)^{-1} - {\Sigma}(\bX_i)^{-1} \) converges at the same rate as \( \widehat{\Sigma}(\bX_i) - {\Sigma}(\bX_i) \), as provided in  Assumption~\ref{ass:rate}.  Summing up these three bounds completes the proof.
\end{proof}

\subsection{Proof of Theorem~\ref{thm:regret:DR}}\label{appendix:pf:DRregret}
\begin{proof}

To proceed with the proof, we leverage the results in \cite{zhou2023offline}. We first adopt several useful definitions.
\begin{definition}
Given the individual-level feature domain $\mathcal{X}$, a policy class $\Pi$, a set of $n$ points $\left\{x_{1}, \ldots, x_{n}\right\} \subset$ $\mathcal{X}$, define:
\begin{enumerate}
\item Hamming distance between any two policies $\pi_a$ and $\pi_b$ in $\Pi: H\left(\pi_a, \pi_b\right)=\frac{1}{n} \sum_{i=1}^{n} \mathds{1}\left(\pi_a\left(x_{i}\right) \neq \pi_b\left(x_{i}\right)\right)$.
    
\item $\epsilon$-Hamming covering number of the set $\left\{x_{1}, \ldots, x_{n}\right\}$ :
$N_{H}\left(\epsilon, \Pi,\left\{x_{1}, \ldots, x_{n}\right\}\right)$ is the smallest number $K$ of policies $\left\{\pi_1, \ldots, \pi_{K}\right\}$ in $\Pi$, such that $\forall \pi \in$ $\Pi, \exists \pi_{i}, H\left(\pi, \pi_{i}\right) \leq \epsilon .$

\item $\epsilon$-Hamming covering number of $\Pi: N_{H}(\epsilon, \Pi)=\sup \left\{N_{H}\left(\epsilon, \Pi,\left\{x_{1}, \ldots, x_{m}\right\}\right) \mid m \geq 1, x_{1}, \ldots, x_{m} \in\right.$ $\mathcal{X}\} .$

\item Entropy integral: $\kappa(\Pi)=\int_{0}^{1} \sqrt{\log N_{H}\left(\epsilon^{2}, \Pi\right)} d \epsilon$.
\end{enumerate}
\end{definition}

\begin{definition}
Define the doubly robust estimator using simplified notation:
\begin{equation}
    \begin{aligned}
        \widehat{V}^{\addDR}(\pi) &= \frac{1}{n}\sum_{i=1}^{n} \left(\ \hat{\overline{\bg}}(\bX_i) +  \widehat{\Sigma}(\bX_i)^{-1}\left(\overline{\bY}_{i}- \hat{\overline{\bg}}^\top(\bX_i)\phi(\bA_i)\right)\phi(\bA_i) \right)^\top \phi(\pi(\bX_i)) \\
        & = \frac{1}{n}\sum_{i=1}^{n}\sum_{j=0}^{M_i} \left[\ \hat{\overline{\bg}}(\bX_i) +  \widehat{\Sigma}(\bX_i)^{-1}\left(\overline{\bY}_{i}- \hat{\overline{\bg}}^\top(\bX_i)\phi(\bA_i)\right)\phi(\bA_i) \right]_j \pi(\bX_{ij}) \\
        & = \sum_{j=0}^{m_{\max}} \underbrace{\frac{1}{n}\sum_{i=1}^{n}\left[\ \hat{\overline{\bg}}(\bX_i) +  \widehat{\Sigma}(\bX_i)^{-1}\left(\overline{\bY}_{i}- \hat{\overline{\bg}}^\top(\bX_i)\phi(\bA_i)\right)\phi(\bA_i) \right]_j \pi(\bX_{ij})}_{:=\widehat{V}^{\addDR}_j(\pi)}
    \end{aligned}
\end{equation}
By definition, $\phi(\pi(\bX_i))=(1,\pi(\bX_{i1}),\ldots,\pi(\bX_{iM_i}))$ is an $(M_i+1)$-dimensional vector. Following the same approach as in the proof of Theorem~\ref{thm:regret}, we append zeros to the policy vector by setting $\bX_{ij}=\emptyset$ for $j=M_i+1,\ldots,m_{\max}$ and defining $\pi(\emptyset)=0$. Additionally, we define the intercept term using the notation $\pi(\bX_{i0})\equiv1$.

Similarly, we define the oracle doubly robust estimator as
\begin{equation}
    \begin{aligned}
        \widetilde{V}^{\addDR}(\pi) &= \frac{1}{n}\sum_{i=1}^{n} \left(\ {\overline{\bg}}(\bX_i) +  {\Sigma}(\bX_i)^{-1}\left(\overline{\bY}_{i}- {\overline{\bg}}^\top(\bX_i)\phi(\bA_i)\right)\phi(\bA_i) \right)^\top \phi(\pi(\bX_i)) \\
        & = \sum_{j=0}^{m_{\max}} \underbrace{\frac{1}{n}\sum_{i=1}^{n}\left[\ {\overline{\bg}}(\bX_i) +  {\Sigma}(\bX_i)^{-1}\left(\overline{\bY}_{i}- {\overline{\bg}}^\top(\bX_i)\phi(\bA_i)\right)\phi(\bA_i) \right]_j \pi(\bX_{ij})}_{:=\widetilde{V}^{\addDR}_j(\pi)}
    \end{aligned}
\end{equation}
and the true policy value as 
\begin{equation}
    \begin{aligned}
        V(\pi) &= \E\left[ \left(\ {\overline{\bg}}(\bX_i) +  {\Sigma}(\bX_i)^{-1}\left(\overline{\bY}_{i}- {\overline{\bg}}^\top(\bX_i)\phi(\bA_i)\right)\phi(\bA_i) \right)^\top \phi(\pi(\bX_i))\right] \\
        & = \sum_{j=0}^{m_{\max}} \underbrace{\E\left[\left[\ {\overline{\bg}}(\bX_i) +  {\Sigma}(\bX_i)^{-1}\left(\overline{\bY}_{i}- {\overline{\bg}}^\top(\bX_i)\phi(\bA_i)\right)\phi(\bA_i) \right]_j \pi(\bX_{ij})\right]}_{:=V_j(\pi)}
    \end{aligned}
\end{equation}
\end{definition}

\begin{definition}\label{def:difffunction}
Define the following difference functions $\widetilde{\Delta}(\cdot , \cdot),\widehat{\Delta}(\cdot , \cdot),{\Delta}(\cdot , \cdot):\Pi \times \Pi \rightarrow \mathbb{R}$ between any two fixed policies $\pi_a$ and $\pi_b$ in $\Pi$ as:
$$\begin{aligned}
    \widetilde{\Delta}\left(\pi_a, \pi_b\right)&:=   \widetilde{V}^{\addDR}(\pi_a)- \widetilde{V}^{\addDR}(\pi_b)=\sum_{j=0}^{m_{\max}} \underbrace{\widetilde{V}^{\addDR}_j(\pi_a)-\widetilde{V}^{\addDR}_j(\pi_b)}_{ :=\widetilde{\Delta}_j\left(\pi_a, \pi_b\right)}\\
     \widehat{\Delta}\left(\pi_a, \pi_b\right)&:=   \widehat{V}^{\addDR}(\pi_a)- \widehat{V}^{\addDR}(\pi_b)=\sum_{j=0}^{m_{\max}} \underbrace{\widehat{V}^{\addDR}_j(\pi_a)-\widehat{V}^{\addDR}_j(\pi_b)}_{ :=\widehat{\Delta}_j\left(\pi_a, \pi_b\right)}\\
    {\Delta}\left(\pi_a, \pi_b\right)&:=   V(\pi_a)- V(\pi_b)=\sum_{j=0}^{m_{\max}} \underbrace{V_j(\pi_a)-V_j(\pi_b)}_{ :={\Delta}_j\left(\pi_a, \pi_b\right)}\\
 \end{aligned}$$
\end{definition}

By the above definitions, the regret of $\pihat^{\addDR}$ relative to $\piast$ can be bounded as
\begin{equation}\label{equ:pf:DRregret:first}
    \begin{aligned}
    R(\pihat^{\addDR}) &=V(\piast)-V(\pihat^{\addDR})\\
    &=V(\piast)-\widetilde{V}^{\addDR}(\piast)+\widetilde{V}^{\addDR}(\piast)-\widehat{V}^{\addDR}(\piast)+ \underbrace{\widehat{V}^{\addDR}(\piast)-\widehat{V}^{\addDR}(\pihat^{\addDR})}_{\leq 0,\text{ by definition of $\pihat^{\addDR}$}}\\
    & \quad + \widehat{V}^{\addDR}(\pihat^{\addDR})-\widetilde{V}^{\addDR}(\pihat^{\addDR})+\widetilde{V}^{\addDR}(\pihat^{\addDR})-V(\pihat^{\addDR})\\
    & \leq   \widetilde{\Delta}(\piast,\pihat^{\addDR})-\widehat{\Delta}(\piast,\pihat^{\addDR}) + \Delta(\piast,\pihat^{\addDR})-\widetilde{\Delta}(\piast,\pihat^{\addDR}) \\
    &\leq \sup_{\pi_a,\pi_b\in\Pi}\lvert  \widetilde{\Delta}(\pi_a,\pi_b)-\widehat{\Delta}(\pi_a,\pi_b)  \rvert + \sup_{\pi_a,\pi_b\in\Pi}\lvert \Delta(\pi_a,\pi_b)-\widetilde{\Delta}(\pi_a,\pi_b)\rvert.
    \end{aligned}
\end{equation}
In the following, we will bound each of these two terms separately.
\paragraph{Bounding $\sup_{\pi_a,\pi_b\in\Pi}\lvert  \widetilde{\Delta}(\pi_a,\pi_b)-\widehat{\Delta}(\pi_a,\pi_b)  \rvert$.}

Due to triangle inequality, we have that $ \sup_{\pi_a,\pi_b\in\Pi}\lvert  \widetilde{\Delta}(\pi_a,\pi_b)-\widehat{\Delta}(\pi_a,\pi_b)  \rvert\leq \sum_{j=0}^{m_{\max}}  \sup_{\pi_a,\pi_b\in\Pi}\lvert  \widetilde{\Delta}_j(\pi_a,\pi_b)-\widehat{\Delta}_j(\pi_a,\pi_b)  \rvert $.
Therefore, it suffices to bound $ \sup_{\pi_a,\pi_b\in\Pi}\lvert  \widetilde{\Delta}_j(\pi_a,\pi_b)-\widehat{\Delta}_j(\pi_a,\pi_b)  \rvert $.
We have the following decomposition:
 \begin{equation*}
        \begin{aligned}
      &    \sup_{\pi_a,\pi_b\in\Pi}\lvert  \widetilde{\Delta}_j(\pi_a,\pi_b)-\widehat{\Delta}_j(\pi_a,\pi_b)  \rvert 
      \\ 
      \leq &   \sup_{\pi_a,\pi_b\in\Pi} \Bigg\lvert\underbrace{\frac{1}{n}\sum_{i=1}^{n} \left[\left(\hat{\overline{\bg}}(\bX_i)-\overline{\bg}(\bX_i)\right)^\top \left(\phi(\bA_i)\phi(\bA_i)^\top\Sigma(\bX_i)^{-1} -1\right)\right]_j \left(\pi_a(\bX_{ij})-\pi_b(\bX_{ij})\right) }_{:=T_{1,j}(\pi_a,\pi_b)}\Bigg\rvert\\
            & +  \sup_{\pi_a,\pi_b\in\Pi}\Bigg\lvert \underbrace{\frac{1}{n}\sum_{i=1}^{n}   \left[\left(\overline{\bY}_{i}- \overline{\bg}^\top(\bX_i)\phi(\bA_i)\right)\phi(\bA_i)^\top \left( {\Sigma}(\bX_i)^{-1} -\widehat{\Sigma}(\bX_i)^{-1} \right)\right]_j \left(\pi_a(\bX_{ij})-\pi_b(\bX_{ij})\right)}_{:=T_{2,j}(\pi_a,\pi_b)} \Bigg\rvert\\
            & + \sup_{\pi_a,\pi_b\in\Pi} \Bigg\lvert \underbrace{\frac{1}{n}\sum_{i=1}^{n}\left[\left(\hat{\overline{\bg}}(\bX_i)-\overline{\bg}(\bX_i)\right)^\top \phi(\bA_i)\phi(\bA_i)^\top  \left( \widehat{\Sigma}(\bX_i)^{-1} -{\Sigma}(\bX_i)^{-1} \right)\right]_j\left(\pi_a(\bX_{ij})-\pi_b(\bX_{ij})\right)}_{:=T_{3,j}(\pi_a,\pi_b)} \Bigg\rvert.      
        \end{aligned}
    \end{equation*}
We bound each of the three terms in turn.
Since the nuisance models are trained on an independent data split, we have shown in the proof of Theorem~\ref{thm:semieff} that term $T_{1,j}(\pi_a,\pi_b)$ has mean 0 for any fixed $\pi_a,\pi_b\in\Pi$. Therefore, we can write the following:
\[
\begin{aligned}
 \sup_{\pi_a,\pi_b\in\Pi}\lvert T_{1,j}(\pi_a,\pi_b)\rvert & = \sup_{\pi_a,\pi_b\in\Pi}\left\lvert T_{1,j}(\pi_a,\pi_b) -\E[ T_{1,j}(\pi_a,\pi_b)]\right\rvert.
\end{aligned}
\]
Next, by Assumption~\ref{ass:rate}, we know that
$$
 \sup_{\bx}\left\{\left|\hat{\overline{g}}^{(j)}(\bx)-{\overline{g}}^{(j)}(\bx)\right|\right\} \xrightarrow{p} 0\quad \text{for}\ j=0,\ldots,m
$$
with probability tending to 1. Consequently, the individual summands in $T_{1,j}(\pi)$
is bounded with probability tending to 1. 
Combining this with the fact that clusters are i.i.d., we can leverage Lemma~2 of \cite{zhou2023offline} by treating the summands in $T_{1,j}(\pi)$ as the i.i.d. $\Gamma_i$ in their result, and specializing it to the one-dimensional case. Thus, we obtain the result: $\forall \delta>0$, with probability at least $1-2 \delta$,
\begin{equation}\label{pf:equ:T1}
    \begin{aligned}
    & \sup_{\pi_a,\pi_b\in\Pi}\left\lvert T_{1,j}(\pi_a,\pi_b) -\E[ T_{1,j}(\pi_a,\pi_b)]\right\rvert \\
    \leq &\  o\left(\frac{1}{\sqrt{n}}\right) + \left(54.4 \sqrt{2} \kappa(\Pi)+435.2+\sqrt{2 \log \frac{1}{\delta}}\right) \sqrt{\frac{V_*}{n}},
    \end{aligned}
\end{equation}
where $V_*$ is defined as the worst-case variance of evaluating
the difference between two policies in $\Pi$:
\begin{equation}        
V_*:=\sup_{\pi_a,\pi_b\in\Pi}\E\left[ \left[\left(\hat{\overline{\bg}}(\bX_i)-\overline{\bg}(\bX_i)\right)^\top \left(\phi(\bA_i)\phi(\bA_i)^\top\Sigma(\bX_i)^{-1} -1\right)\right]^2_j \left(\pi_a(\bX_{ij})-\pi_b(\bX_{ij})\right)^2 \right].
\end{equation} 
Since the propensity scores are strictly bounded away from zero, the propensity score matrix $\Sigma(\bX_i)^{-1} $ is bounded, and therefore, under Assumption~\ref{ass:rate}, $V_*$ can be upper bounded by
\begin{equation}
    \begin{aligned}
    V_* & \leq \E\left[ \left[\left(\hat{\overline{\bg}}(\bX_i)-\overline{\bg}(\bX_i)\right)^\top \left(\phi(\bA_i)\phi(\bA_i)^\top\Sigma(\bX_i)^{-1} -1\right)\right]^2_j  \right] \\
    & \lesssim \E\left[\left(\hat{\overline{\bg}}(\bX_i)-\overline{\bg}(\bX_i)\right)^\top\left(\hat{\overline{\bg}}(\bX_i)-\overline{\bg}(\bX_i)\right) \right]\\
    & =  O_{p}\left(r_{n,g}^2\right) =o_p(1).
   \end{aligned}
\end{equation}
Combining this with Equation~\eqref{pf:equ:T1}, we have
\begin{equation}\label{pf:equ:T1:bound}
    \begin{aligned}
        & \sup_{\pi_a,\pi_b\in\Pi}\left\lvert T_{1,j}(\pi_a,\pi_b)\right\rvert 
        \leq  o\left(\frac{1}{\sqrt{n}}\right) + o_p\left(\frac{1}{\sqrt{n}}\right)= o_p\left(\frac{1}{\sqrt{n}}\right).
    \end{aligned}
\end{equation}
Following the similar argument, we have $\sup_{\pi_a,\pi_b\in\Pi}\left\lvert T_{2,j}(\pi_a,\pi_b)\right\rvert = o_p\left(\frac{1}{\sqrt{n}}\right)$.
Next, we bound $\sup_{\pi_a,\pi_b\in\Pi}\left\lvert T_{3,j}(\pi_a,\pi_b)\right\rvert$ as follows:
\begin{equation}
    \begin{aligned}
    & \sup_{\pi_a,\pi_b\in\Pi}\left\lvert T_{3,j}(\pi_a,\pi_b)\right\rvert\\
      =  & \sup_{\pi_a,\pi_b\in\Pi} \Bigg\lvert \frac{1}{n}\sum_{i=1}^{n}\left[\left(\hat{\overline{\bg}}(\bX_i)-\overline{\bg}(\bX_i)\right)^\top \phi(\bA_i)\phi(\bA_i)^\top  \left( \widehat{\Sigma}(\bX_i)^{-1} -{\Sigma}(\bX_i)^{-1} \right)\right]_j\left(\pi_a(\bX_{ij})-\pi_b(\bX_{ij})\right) \Bigg\rvert \\
     \leq   & \ \frac{1}{n}\Bigg\lvert\left[\left(\hat{\overline{\bg}}(\bX_i)-\overline{\bg}(\bX_i)\right)^\top \phi(\bA_i)\phi(\bA_i)^\top  \left( \widehat{\Sigma}(\bX_i)^{-1} -{\Sigma}(\bX_i)^{-1} \right)\right]_j\Bigg\rvert  \\
     = & \ \frac{1}{n}\sum_{i=1}^{n}\Bigg\lvert\left(\hat{\overline{\bg}}(\bX_i)-\overline{\bg}(\bX_i)\right)^\top \phi(\bA_i)\phi(\bA_i)^\top  \left( \widehat{\Sigma}(\bX_i)^{-1} -{\Sigma}(\bX_i)^{-1} \right) \boldsymbol{v}_j\Bigg\rvert\\
     \leq & \sqrt{\frac{1}{n}\sum_{i=1}^{n} \left(\left(\hat{\overline{\bg}}(\bX_i)-\overline{\bg}(\bX_i)\right)^\top \phi(\bA_i)\right)^2} \sqrt{\frac{1}{n}\sum_{i=1}^{n}\left(\phi(\bA_i)^\top  \left( \widehat{\Sigma}(\bX_i)^{-1} -{\Sigma}(\bX_i)^{-1} \right) \boldsymbol{v}_j\right)^2 },
    \end{aligned}
\end{equation}
where $\boldsymbol{v}_j$ is a standard basis vector with 1 in the $j$th position and 0 elsewhere, and the last inequality above follows from Cauchy-Schwartz.
Taking the expectation of both sides yields:
\begin{equation*}
    \begin{aligned}
    & \E\left[\sup_{\pi_a,\pi_b\in\Pi}\left\lvert T_{3,j} (\pi_a,\pi_b)\right\rvert\right] \\
    \leq & \ \E\left[ \sqrt{\frac{1}{n}\sum_{i=1}^{n} \left(\left(\hat{\overline{\bg}}(\bX_i)-\overline{\bg}(\bX_i)\right)^\top \phi(\bA_i)\right)^2} \sqrt{\frac{1}{n}\sum_{i=1}^{n}\left(\phi(\bA_i)^\top  \left( \widehat{\Sigma}(\bX_i)^{-1} -{\Sigma}(\bX_i)^{-1} \right) \boldsymbol{v}_j\right)^2 }\right] \\
    \leq & \ \sqrt{\E\left[ \frac{1}{n}\sum_{i=1}^{n}\left(\left(\hat{\overline{\bg}}(\bX_i)-\overline{\bg}(\bX_i)\right)^\top\phi(\bA_i)\right)^2\right]}\times  \sqrt{\E\left[ \frac{1}{n}\sum_{i=1}^{n} \left( \phi(\bA_i)^\top  \left( \widehat{\Sigma}(\bX_i)^{-1} -{\Sigma}(\bX_i)^{-1} \right)\boldsymbol{v}_j)  \right)^2\right] }\\
    = & \ \sqrt{\E\left[\left(\left(\hat{\overline{\bg}}(\bX_i)-\overline{\bg}(\bX_i)\right)^\top\phi(\bA_i)\right)^2\right]}\times  \sqrt{\E\left[ \left( \phi(\bA_i)^\top  \left( \widehat{\Sigma}(\bX_i)^{-1} -{\Sigma}(\bX_i)^{-1} \right)\boldsymbol{v}_j \right)^2\right] }\\
    =  & \sqrt{O_p(r_{n,g}^2)} \sqrt{O_p(r_{n,\Sigma}^2)} = o_p\left(\frac{1}{\sqrt{n}}\right).
    \end{aligned}
\end{equation*} 
Consequently, by Markov's inequality, this immediately implies $\sup_{\pi_a,\pi_b\in\Pi}\left\lvert T_{3,j}(\pi_a,\pi_b)\right\rvert=o_p\left(\frac{1}{\sqrt{n}}\right)$.
 Putting the above bounds for $\sup_{\pi_a,\pi_b\in\Pi}\left\lvert T_{1,j}(\pi_a,\pi_b)\right\rvert,\sup_{\pi_a,\pi_b\in\Pi}\left\lvert T_{2,j}(\pi_a,\pi_b)\right\rvert$ and $\sup_{\pi_a,\pi_b\in\Pi}\left\lvert T_{3,j}(\pi_a,\pi_b)\right\rvert$ together, we arrive at $$\sup_{\pi_a,\pi_b\in\Pi}\lvert  \widetilde{\Delta}(\pi_a,\pi_b)-\widehat{\Delta}(\pi_a,\pi_b) \rvert=o_p\left(\frac{1}{\sqrt{n}}\right).$$

\paragraph{Bounding $ \sup_{\pi_a,\pi_b\in\Pi}\lvert \Delta(\pi_a,\pi_b)-\widetilde{\Delta}(\pi_a,\pi_b)\rvert$.} 
By triangle inequality, we have that $ \sup_{\pi_a,\pi_b\in\Pi}\lvert  \widetilde{\Delta}(\pi_a,\pi_b)-{\Delta}(\pi_a,\pi_b)  \rvert\leq \sum_{j=1}^{m_{\max}}  \sup_{\pi_a,\pi_b\in\Pi}\lvert  \widetilde{\Delta}_j(\pi_a,\pi_b)-{\Delta}_j(\pi_a,\pi_b)  \rvert $. (Note that we exclude the index $j=0$  in the summation since $\widetilde{\Delta}_0(\pi_a,\pi_b),{\Delta}_0(\pi_a,\pi_b)$ are always zero given that $\pi(\bX_{i0})\equiv1$ for any $\pi\in\Pi$ by definition, and thus their difference is also zero.)
Therefore, it suffices to bound $ \sup_{\pi_a,\pi_b\in\Pi}\lvert  \widetilde{\Delta}_j(\pi_a,\pi_b)-{\Delta}_j(\pi_a,\pi_b)\rvert$ for $j=1,\ldots,m_{\max}$.

We have the following: 
\begin{equation}
    \begin{aligned}
       & \sup_{\pi_a,\pi_b\in\Pi}\lvert  \widetilde{\Delta}_j(\pi_a,\pi_b)-{\Delta}_j(\pi_a,\pi_b)\rvert \\ 
      = &  \sup_{\pi_a,\pi_b\in\Pi}\left\lvert \frac{1}{n}\sum_{i=1}^{n}\left[\ {\overline{\bg}}(\bX_i) +  {\Sigma}(\bX_i)^{-1}\left(\overline{\bY}_{i}- {\overline{\bg}}^\top(\bX_i)\phi(\bA_i)\right)\phi(\bA_i) \right]_j \left(\pi_a(\bX_{ij})-\pi_b(\bX_{ij})\right)  \right.\\
      & \qquad \qquad - \left. \E\left[\left[\ {\overline{\bg}}(\bX_i) +  {\Sigma}(\bX_i)^{-1}\left(\overline{\bY}_{i}- {\overline{\bg}}^\top(\bX_i)\phi(\bA_i)\right)\phi(\bA_i) \right]_j \left(\pi_a(\bX_{ij})-\pi_b(\bX_{ij})\right)  \right]\right\rvert \\ 
    \end{aligned}
\end{equation}
By Assumptions~\ref{ass:iidcluster},~\ref{ass:DGP}~and~\ref{ass:regret:bound}(a),  the summands in $\widetilde{\Delta}_j(\pi_a,\pi_b)$ are i.i.d. with bounded support.
Therefore, we can again apply Lemma 2 of \cite{zhou2023offline} to obtain the result:
$\forall \delta>0$, with probability at least $1-2 \delta$,
\begin{equation}
    \begin{aligned}
        & \sup_{\pi_a,\pi_b\in\Pi}\lvert  \widetilde{\Delta}_j(\pi_a,\pi_b)-{\Delta}_j(\pi_a,\pi_b)\rvert  \\
        \leq &\  o\left(\frac{1}{\sqrt{n}}\right) + \left(54.4 \sqrt{2} \kappa(\Pi)+435.2+\sqrt{2 \log \frac{1}{\delta}}\right) \sqrt{\frac{V_{*}^j}{n}},
    \end{aligned}
\end{equation}
where $V_*^j$ is defined as:
\begin{equation}
    \begin{aligned}
        V_*^j := & \sup_{\pi_a,\pi_b\in\Pi}\E\left[\left[\ {\overline{\bg}}(\bX_i) +  {\Sigma}(\bX_i)^{-1}\left(\overline{\bY}_{i}- {\overline{\bg}}^\top(\bX_i)\phi(\bA_i)\right)\phi(\bA_i) \right]^2_j\left(\pi_a(\bX_{ij})-\pi_b(\bX_{ij})\right)^2 \right]\\
        \leq & \ \E\left[\left[\ {\overline{\bg}}(\bX_i) +  {\Sigma}(\bX_i)^{-1}\left(\overline{\bY}_{i}- {\overline{\bg}}^\top(\bX_i)\phi(\bA_i)\right)\phi(\bA_i) \right]^2_j \right],
    \end{aligned}
\end{equation} 
which is bounded by the second moment of the $j$th element of the doubly robust score vector.
Summing over $j$, with probability at least $1-2m_{\max}\delta$, we have:
\begin{equation}
    \begin{aligned}
        & \sup_{\pi_a,\pi_b\in\Pi}\lvert \Delta(\pi_a,\pi_b)-\widetilde{\Delta}(\pi_a,\pi_b)\rvert 
        \leq o\left(\frac{1}{\sqrt{n}}\right)+ \left(54.4 \sqrt{2} \kappa(\Pi)+435.2+\sqrt{2 \log \frac{1}{\delta}}\right) \frac{\sum_{j=1}^{m_{\max}}\sqrt{V_{*}^j}}{\sqrt{n}}.
    \end{aligned}
\end{equation}
We also note that, in our case, $\Pi$ is a VC class with finite VC dimension $\nu$. By Theorem 1 of \cite{haussler1995sphere}, the covering number can be bounded by VC dimension as follows: $N_H(\epsilon, \Pi) \leq e(V C(\Pi)+1)\left(\frac{2 e}{\epsilon}\right)^{V C(\Pi)}$. By taking the natural log of both sides and computing the entropy integral, one can show that $\kappa\left(\Pi\right) \leq 2.5 \sqrt{V C\left(\Pi\right)}=2.5\sqrt{\nu}$. Then, plugging in the upper bound on $\kappa(\Pi)$, we can show that 
$\sup_{\pi_a,\pi_b\in\Pi}\lvert \Delta(\pi_a,\pi_b)-\widetilde{\Delta}(\pi_a,\pi_b)\rvert 
\leq O_p(\sqrt{\frac{\nu}{n}})$.

Combining the two bounds above yields the desired result.
\end{proof}

\section{Additional results for observational studies}

\subsection{Plug-in approach with unknown propensity scores}\label{appendix:unknownPS}

We now consider a plug-in approach to observational studies where propensity scores are unknown and must be estimated.
As shown in \cite{kitagawa2018should}, if we have consistent estimates $\hat{e}_j(a\mid\bx)$ of the propensities and plug them into Equation~\eqref{equ:Vhat:addIPW}, we can still learn a consistent policy $\pihat$ by maximizing $\widehat{V}(\pi)$. 
We define the estimated policy based on this plug-in approach as,
\begin{equation}\label{equ:pihat:plugin}
    \pihat_{\hat{e}}\ := \  \underset{\pi\in\Pi}{\operatorname{argmax}}\   \widehat{V}_{\hat{e}} \quad \text{where} \quad \widehat{V}_{\hat{e}}\ := \
    \frac{1}{n}\sum_{i=1}^{n}\overline{\bY}_{i} \left\{\sum_{j=1}^{M_i}
  \left( \frac{\mathds{1}\left\{
       A_{ij}=\pi(\bX_{ij})
     \right\}}{  \hat{e}_j(
 \pi(\bX_{ij})
      \mid\bX_{i})}
  -1\right)+1\right\},
\end{equation}
where $\hat{e}_j$ is the estimated propensity score.
We examine how the estimation of propensity scores affect the performance of learned policy.
The next assumption concerns the convergence rate of the estimation error of the propensity score.
\begin{assumption}[Estimation error of $e_j$]\label{ass:propensity:error}
For some sequence $\rho_n\rightarrow\infty$, assume that 
$$ \mathbb{E}\left[\frac{1}{n}\sum_{i=1}^{n}\sup_{a\in\{0,1\}}\left|\sum_{j=1}^{M_i} \frac{1}{  {e}_j(
 a
      \mid\bX_{i})}
  - \frac{1}{  \hat{e}_j(
a
      \mid\bX_{i})} \right|\right]=O\left(\rho_n^{-1}\right).$$
\end{assumption}

We next derive the regret guarantees when using the estimated propensity score.
Theorem~\ref{thm:regret:plugin} shows that a simple plug-in approach based on the IPW-type estimator in Equation~\eqref{equ:Vhat:addIPW} typically leads to a learned policy with a slower-than-$\sqrt{n}$ convergence rate.
\begin{theorem}[Regret bound with estimated propensity score]\label{thm:regret:plugin}
    Suppose Assumptions~\ref{ass:iidcluster}--\ref{ass:regret:bound} and \ref{ass:propensity:error} hold. Define $ \pihat_{\hat{e}}$ as the solution in Equation~\eqref{equ:pihat:plugin}.
The regret of $ \pihat_{\hat{e}}$ can be upper bounded as,
\begin{equation}
    R( \pihat_{\hat{e}})\leq O_p(n^{-\frac{1}{2}}\vee\rho_n^{-1}).
\end{equation}
\end{theorem}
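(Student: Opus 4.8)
The plan is to follow the same two-step template used for Theorem~\ref{thm:regret}, but to insert an extra error term that accounts for the plugged-in propensity scores. First I would use the optimality of $\pihat_{\hat{e}}$ in Equation~\eqref{equ:pihat:plugin}: writing $R(\pihat_{\hat{e}})=V(\piast)-\widehat{V}_{\hat{e}}(\piast)+[\widehat{V}_{\hat{e}}(\piast)-\widehat{V}_{\hat{e}}(\pihat_{\hat{e}})]+\widehat{V}_{\hat{e}}(\pihat_{\hat{e}})-V(\pihat_{\hat{e}})$ and noting that the bracketed term is nonpositive, one obtains the uniform bound
\[
R(\pihat_{\hat{e}})\leq 2\sup_{\pi\in\Pi}\lvert \widehat{V}_{\hat{e}}(\pi)-V(\pi)\rvert\leq 2\sup_{\pi\in\Pi}\lvert \widehat{V}_{\hat{e}}(\pi)-\widehat{V}(\pi)\rvert+2\sup_{\pi\in\Pi}\lvert \widehat{V}(\pi)-V(\pi)\rvert,
\]
where $\widehat{V}$ is the oracle (known-propensity) estimator of Equation~\eqref{equ:Vhat:addIPW}. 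The second term is exactly the worst-case estimation error already controlled in the proof of Theorem~\ref{thm:regret}, so it is $O_p(\sqrt{\nu/n})=O_p(n^{-1/2})$; I would simply cite that bound.

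For the first (propensity-estimation) term I would expand the difference explicitly. Since the two estimators differ only through the inverse weights,
\[
\widehat{V}_{\hat{e}}(\pi)-\widehat{V}(\pi)=\frac{1}{n}\sum_{i=1}^{n}\overline{Y}_{i}\sum_{j=1}^{M_i}\mathds{1}\{A_{ij}=\pi(\bX_{ij})\}\left(\frac{1}{\hat{e}_j(\pi(\bX_{ij})\mid\bX_i)}-\frac{1}{e_j(\pi(\bX_{ij})\mid\bX_i)}\right).
\]
Using the bounded-outcome assumption $\lvert\overline{Y}_i\rvert\leq B$ (Assumption~\ref{ass:regret:bound}(a)) and $\mathds{1}\{\cdot\}\leq 1$, then taking the supremum over $\pi\in\Pi$ and exploiting that each $\pi(\bX_{ij})\in\{0,1\}$, the inner sum is dominated by the per-unit maximum over $a\in\{0,1\}$, so that the worst-case error is at most $\tfrac{B}{n}\sum_{i}\sum_{j}\sup_{a\in\{0,1\}}\lvert e_j(a\mid\bX_i)^{-1}-\hat{e}_j(a\mid\bX_i)^{-1}\rvert$. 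This is precisely the quantity whose expectation is assumed to be $O(\rho_n^{-1})$ in Assumption~\ref{ass:propensity:error}; a Markov inequality then upgrades the expectation bound to $\sup_{\pi\in\Pi}\lvert \widehat{V}_{\hat{e}}(\pi)-\widehat{V}(\pi)\rvert=O_p(\rho_n^{-1})$. Combining the two pieces and taking the slower of the two rates yields $R(\pihat_{\hat{e}})\leq O_p(n^{-1/2}\vee\rho_n^{-1})$.

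The main obstacle is the interaction between the supremum over policies and the exact form of Assumption~\ref{ass:propensity:error}. In the assumption the supremum over $a$ sits \emph{outside} the sum over $j$ (a single assignment applied to every unit), whereas a policy may assign different values $\pi(\bX_{ij})$ to different units and, through the indicator, may select an arbitrary subset of units that match their observed treatment. Consequently $\sup_{\pi}\lvert\sum_j(\cdot)\rvert$ ranges over all sign/subset patterns, and bounding it by $\sum_j\sup_a\lvert\cdot\rvert$ (which I use above) is a genuine step that ignores cancellation and is only order-equivalent to the assumed quantity in the absence of sign cancellation across units. I would either argue that this majorization is harmless because both sides are of the same order under the stated rate, or state the assumption in the slightly stronger per-unit form that is exactly what the reduction produces; this bookkeeping around the $\sup_a$/$\sum_j$ ordering is where the care is needed, while the remaining steps are routine given Theorem~\ref{thm:regret}.
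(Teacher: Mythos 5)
Your proposal is correct and follows essentially the same route as the paper: the same telescoping/optimality decomposition into $2\sup_{\pi}\lvert \widehat{V}(\pi)-V(\pi)\rvert + 2\sup_{\pi}\lvert \widehat{V}(\pi)-\widehat{V}_{\hat{e}}(\pi)\rvert$, the first term handled by citing Theorem~\ref{thm:regret} and the second by expanding the difference of inverse weights and invoking Assumption~\ref{ass:propensity:error}. The one point you flag as an obstacle --- that the policy-indexed supremum naturally reduces to the per-unit bound $\sum_j\sup_a\lvert e_j(a\mid\bX_i)^{-1}-\hat{e}_j(a\mid\bX_i)^{-1}\rvert$ rather than to the quantity $\sup_a\lvert\sum_j(\cdot)\rvert$ appearing in Assumption~\ref{ass:propensity:error}, which permits cancellation across $j$ --- is a genuine subtlety, and the paper's own proof performs exactly this majorization silently in passing from its first displayed line to its second; your suggested fix of stating the assumption in the per-unit form is the cleaner resolution, so you are, if anything, more careful than the reference proof rather than missing a step.
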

\begin{proof}
By definition, the regret of $\pihat_{\hat{e}}$ relative to $\piast$ is given by
\begin{equation*}
    \begin{aligned}
    R(\pihat_{\hat{e}}) & = V(\piast)-V(\pihat_{\hat{e}}) \\
    & = V(\piast)-\widehat{V}(\piast)+\widehat{V}(\piast)-\widehat{V}_{\hat{e}}(\piast)+\widehat{V}_{\hat{e}}(\piast)-\widehat{V}_{\hat{e}}(\pihat_{\hat{e}})+\widehat{V}_{\hat{e}}(\pihat_{\hat{e}})-\widehat{V}(\pihat_{\hat{e}})+\widehat{V}(\pihat_{\hat{e}})-V(\pihat_{\hat{e}})\\
    &\leq 2\sup_{\pi\in\Pi}\lvert \widehat{V}(\pi)-V(\pi) \rvert + 2\sup_{\pi\in\Pi}\lvert \widehat{V}(\pi)-\widehat{V}_{\hat{e}}(\pi) \rvert ,
        \end{aligned}
    \end{equation*}
where $\widehat{V}(\cdot)$ is the empirical policy value estimate using the true propensity score in  Equation~\eqref{equ:Vhat:addIPW}.
    Due to Theorem~\ref{thm:regret}, the first term is bounded by $O_p(n^{-\frac{1}{2}})$ . We now study the second term.
\begin{equation*}
    \begin{aligned}
        \sup_{\pi\in\Pi}\lvert \widehat{V}(\pi)-\widehat{V}_{\hat{e}}(\pi) \rvert &=\sup_{\pi\in\Pi}\left|\frac{1}{n} \sum_{i=1}^n \overline{Y}_{i} \sum_{j=1}^{M_i}
\left(\frac{\mathds{1}\left\{
       A_{ij}=\pi(\bX_{ij})
     \right\}}{  {e}_j(
 \pi(\bX_{ij})
      \mid\bX_{i})}
  - \frac{\mathds{1}\left\{
       A_{ij}=\pi(\bX_{ij})
     \right\}}{  \hat{e}_j(
 \pi(\bX_{ij})
      \mid\bX_{i})}\right) \right|\\
      &\leq B  \frac{1}{n} \sum_{i=1}^n \sup_{a\in\{0,1\}}\left|\sum_{j=1}^{M_i} \frac{1}{  {e}_j(
a
      \mid\bX_{i})}
  - \frac{1}{  {e}_j( a
      \mid\bX_{i})} \right|\\
      &=O_p(\rho_n^{-1}).
    \end{aligned}
\end{equation*}
Combining the above, we obtain the desired result.
\end{proof}

\subsection{Mixed-integer linear program formulation for the doubly robust estimator}\label{appendix:MILP:DR}

Similar to Section~\ref{sec:MIP}, we consider linear policy rules as an example.
By introducing binary variables $p_{ij}=\pi(\bX_{ij})$, we can rewrite the doubly robust estimator in Equation~\eqref{equ:DRestor:closedform} (up to some constants) as,
\[
 \frac{1}{n}\sum_{i=1}^{n}  \sum_{j=1}^{M_i}
 \left\{ \hat{\overline{g}}^{(j)}(\bX_i)+ \left(\overline{\bY}_{i}- \hat{\overline{\bg}}(\bX_i)^\top\phi(\bA_i)\right) \frac{A_{ij}- \hat{e}_j(1 \mid\bX_{i})}{  \hat{e}_j(1 \mid\bX_{i}) \hat{e}_j(0 \mid\bX_{i})} \right\} p_{ij}.
\]
This implies that solving $\pihat^{\addDR}$ can be equivalently represented as solving the following linear MIP:
\begin{equation}
    \begin{aligned}
&\max_{\substack{\bbeta \in \mathcal{B}, \{p_{ij}\} \in \mathbb{R}}}  \; \frac{1}{n}\sum_{i=1}^{n}  \sum_{j=1}^{M_i}
 \left\{\hat{\overline{g}}^{(j)}(\bX_i)+ \left(\overline{\bY}_{i}- \hat{\overline{\bg}}(\bX_i)^\top\phi(\bA_i)\right) \frac{A_{ij}- \hat{e}_j(1 \mid\bX_{i})}{  \hat{e}_j(1 \mid\bX_{i}) \hat{e}_j(0 \mid\bX_{i})} \right\} p_{ij}\\
&\text { s.t. } 
\begin{aligned}
   & \frac{\bX_{ij}^\top \bbeta}{C_{ij}}<   p_{ij} \leq 1+\frac{\bX_{ij}^\top\bbeta}{C_{ij}} \quad \text { for } i=1, \ldots, n, \quad\text{and }\quad j=1,\ldots,M_i,\\
& p_{ij} \in\{0,1\},
\end{aligned}
\end{aligned}
\end{equation}
where constants $C_{ij}$ should satisfy $C_{ij}>\sup _{\bbeta \in \mathcal{B}}\left|\bX_{ij}^\top \bbeta\right|$. 

\section{The estimator for the general polynomial additive model} \label{app:higherorder}

Here, we give an explicit formula for the policy estimator under the general polynomial additive structural model given in Equation~\eqref{equ:polyadditive}.
We first index the entries of $\mathbb{E}\left[\phi(\bA_i)\phi(\bA_i)^\top \mid \bX_i\right]$ by the sets $\mathcal{J}$ and $\mathcal{K}$ corresponding to the matrix row and column, where $\mathcal{J}$ and $\mathcal{K}$ contain the indices of individuals associated with the respective entries of $\phi(\bA_i)$.
According to factored propensity score, we have the following decomposition
\[
\left(\mathbb{E}\left[\phi(\bA_i)\phi(\bA_i)^\top \mid \bX_i\right]\right)_{\mathcal{J}, \mathcal{K}}=\prod_{j \in \mathcal{J}\cup \mathcal{K}}e_j(\bX_i).
\]
We let $\mathcal{I}_i^\beta$ denote the power set of $\{\emptyset,1,\ldots,M_i\}$ with cardinality at most $\beta$.
For simplicity, we define $e_j\left(\bX_i\right):=e_j\left(1 \mid \bX_i\right)$.

Now, we provide an explicit expression for the inverse of $\mathbb{E}\left[\phi(\bA_i)\phi(\bA_i)^\top \mid \bX_i\right]$ by citing the results of Lemma 1 of \cite{cortez2022exploiting}:
\begin{lemma}\label{lemma:inverseMat}
    The matrix $\Sigma(\bX_i):=\mathbb{E}\left[\phi(\bA_i)\phi(\bA_i)^\top \mid \bX_i\right]$ is invertible, with each entry of its inverse matrix $\Sigma^{-1}(\bX_i)$ given by the following formula 
    \[
    \left(\Sigma^{-1}(\bX_i)\right)_{\mathcal{J},\mathcal{K}}=\prod_{j \in \mathcal{J}} \frac{-1}{e_j(\bX_i)} \prod_{k \in \mathcal{K}} \frac{-1}{e_k(\bX_i)} \sum_{\substack{\mathcal{U} \in \mathcal{I}_i^\beta \\(\mathcal{J} \cup \mathcal{K}) \subseteq \mathcal{U}}} \prod_{\ell \in \mathcal{U}} \frac{e_{\ell}(\bX_i)}{1-e_{\ell}(\bX_i)}.
    \]
\end{lemma}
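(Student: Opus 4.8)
The plan is to establish invertibility and the explicit inverse formula simultaneously by diagonalizing $\Sigma(\bX_i)$ in an orthogonal ``centered'' monomial basis; the payoff of this route is that it respects the truncation to interaction order $\beta$ automatically, so no separate combinatorial bookkeeping for the cardinality cap is needed. Throughout I write $e_\ell:=e_\ell(\bX_i)$ and suppress the conditioning on $\bX_i$. First I would record the entries of $\Sigma(\bX_i)$: since each $A_{i\ell}\in\{0,1\}$ satisfies $A_{i\ell}^2=A_{i\ell}$, the $(\mathcal{J},\mathcal{K})$ entry equals $\E[\prod_{\ell\in\mathcal{J}\cup\mathcal{K}}A_{i\ell}\mid\bX_i]$, and the factored propensity score assumption (Assumption~\ref{ass:factorCPS}) gives the stated product $\prod_{\ell\in\mathcal{J}\cup\mathcal{K}}e_\ell$.

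The key device is the centered monomial basis $\psi_{\mathcal{S}}(\bA_i):=\prod_{j\in\mathcal{S}}(A_{ij}-e_j)$ for $\mathcal{S}\in\mathcal{I}_i^\beta$. Expanding the product shows $\psi=L\phi$ with $L_{\mathcal{S},\mathcal{T}}=\mathds{1}\{\mathcal{T}\subseteq\mathcal{S}\}\prod_{j\in\mathcal{S}\setminus\mathcal{T}}(-e_j)$; crucially this expansion involves only subsets $\mathcal{T}\subseteq\mathcal{S}$, so $|\mathcal{T}|\le|\mathcal{S}|\le\beta$ and $L$ is a genuine square matrix on $\mathcal{I}_i^\beta$, unitriangular with respect to set inclusion and hence invertible. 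By the conditional independence of treatments (Assumption~\ref{ass:factorCPS}) together with $\E[A_{ij}-e_j\mid\bX_i]=0$, the centered monomials are orthogonal: $\E[\psi_{\mathcal{S}}\psi_{\mathcal{T}}\mid\bX_i]=\mathds{1}\{\mathcal{S}=\mathcal{T}\}\prod_{j\in\mathcal{S}}e_j(1-e_j)$. Therefore $D:=\E[\psi\psi^\top\mid\bX_i]=L\,\Sigma(\bX_i)\,L^\top$ is diagonal with strictly positive entries (positivity in Assumption~\ref{ass:factorCPS}), so it is invertible.

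Combining these facts yields $\Sigma(\bX_i)=L^{-1}D(L^\top)^{-1}$, which proves invertibility and gives the representation $\Sigma^{-1}(\bX_i)=L^\top D^{-1}L$. The final step is to read off the $(\mathcal{J},\mathcal{K})$ entry of $L^\top D^{-1}L$: it is a sum over $\mathcal{S}\supseteq\mathcal{J}\cup\mathcal{K}$ with $|\mathcal{S}|\le\beta$ of the factor $\prod_{j\in\mathcal{S}\setminus\mathcal{J}}(-e_j)\prod_{j\in\mathcal{S}\setminus\mathcal{K}}(-e_j)\prod_{j\in\mathcal{S}}\{e_j(1-e_j)\}^{-1}$. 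Classifying each index by membership in $\mathcal{J}\cap\mathcal{K}$, in $\mathcal{J}\triangle\mathcal{K}$, or in $\mathcal{S}\setminus(\mathcal{J}\cup\mathcal{K})$ collapses the per-index contributions to $\tfrac{1}{e_j(1-e_j)}$, $\tfrac{-1}{1-e_j}$, and $\tfrac{e_j}{1-e_j}$ respectively; relabeling $\mathcal{U}:=\mathcal{S}$ then reproduces exactly the claimed expression for $(\Sigma^{-1}(\bX_i))_{\mathcal{J},\mathcal{K}}$.

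The only genuinely delicate point, and the one I would emphasize, is that the change of basis $L$ stays \emph{inside} the truncated index set $\mathcal{I}_i^\beta$. This closure under taking subsets is what makes the diagonalization go through for general $\beta<m_{\max}$, and it explains why $\Sigma^{-1}(\bX_i)$ is not merely the order-$\beta$ principal submatrix of the full ($\beta=M_i$) Kronecker inverse $\bigotimes_j M_j^{-1}$: the cardinality constraint survives only as the restriction $|\mathcal{U}|\le\beta$ in the summation range. An alternative, more computational route would verify $\Sigma(\bX_i)\,\Sigma^{-1}(\bX_i)=I$ entrywise, but that forces an inclusion--exclusion identity over the subset lattice with the awkward cardinality cap on the intermediate index, which the orthogonal-basis argument sidesteps entirely.
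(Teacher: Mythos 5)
Your proof is correct. Note first that the paper itself does not prove this lemma at all: it simply imports the statement as Lemma~1 of \cite{cortez2022exploiting}, so your argument is a genuinely self-contained derivation where the paper offers only a citation. Your route --- diagonalizing $\Sigma(\bX_i)$ in the centered basis $\psi_{\mathcal{S}}=\prod_{j\in\mathcal{S}}(A_{ij}-e_j)$ --- is sound at every step: the change-of-basis matrix $L$ is unitriangular and, because $\mathcal{I}_i^\beta$ is downward closed under taking subsets, it genuinely acts within the truncated index set; conditional independence plus mean-zero centering gives the orthogonality $\E[\psi_{\mathcal{S}}\psi_{\mathcal{T}}\mid\bX_i]=\mathds{1}\{\mathcal{S}=\mathcal{T}\}\prod_{j\in\mathcal{S}}e_j(1-e_j)$, with strictly positive diagonal by the positivity condition in Assumption~\ref{ass:factorCPS}; and the per-index bookkeeping in $(L^\top D^{-1}L)_{\mathcal{J},\mathcal{K}}$ does reproduce the claimed formula (each $\ell$ contributes $\tfrac{1}{e_\ell(1-e_\ell)}$, $\tfrac{-1}{1-e_\ell}$, or $\tfrac{e_\ell}{1-e_\ell}$ according to whether it lies in $\mathcal{J}\cap\mathcal{K}$, $\mathcal{J}\triangle\mathcal{K}$, or $\mathcal{U}\setminus(\mathcal{J}\cup\mathcal{K})$, matching the target expression term by term). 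Your closing observation --- that the truncated inverse is \emph{not} the principal submatrix of the full $\beta=M_i$ Kronecker inverse, and that the downward closure of $\mathcal{I}_i^\beta$ is precisely what saves the diagonalization --- is the right thing to emphasize. As a bonus, your factorization $\Sigma^{-1}=L^\top D^{-1}L$ yields $\phi(\pi(\bX_i))^\top\Sigma^{-1}(\bX_i)\phi(\bA_i)=\sum_{\mathcal{U}\in\mathcal{I}_i^\beta}\prod_{\ell\in\mathcal{U}}\tfrac{(\pi(\bX_{i\ell})-e_\ell)(A_{i\ell}-e_\ell)}{e_\ell(1-e_\ell)}$ in one line, recovering the final weight formula of Appendix~\ref{app:higherorder} without the page of entrywise algebra the paper performs after invoking the lemma.
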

Recalling Equation~\eqref{equ:polyadditive}, the generalized form of the estimator is given by
\begin{equation*}
    \begin{aligned}
    \widehat{V}(\pi)
    &=\frac{1}{n}\sum_{i=1}^{n} \overline{Y}_{i}\cdot\phi({\pi} (\bX_i))^\top\mathbb{E}\left[\phi(\bA_i)\phi(\bA_i)^\top \mid \bX_i\right]^{-1} \phi(\bA_i),
    \end{aligned}
\end{equation*}
Based on Lemma~\ref{lemma:inverseMat}, we can directly calculate the cluster-level weights, resulting into
\begin{align*}
    \phi({\pi}(\bX_i))^\top\Sigma^{-1}(\bX_i)\phi(\bA_i)&= \sum_{\mathcal{J} \in \mathcal{I}_i^\beta } \sum_{\mathcal{K} \in \mathcal{I}_i^\beta } \prod_{j \in \mathcal{J}} \frac{-A_{ij}}{e_j(\bX_i)} \prod_{k \in \mathcal{K}} \frac{-\pi(\bX_{ik})}{e_k(\bX_i)} \sum_{\substack{\mathcal{U} \in \mathcal{I}_i^\beta \\(\mathcal{J} \cup \mathcal{K}) \subseteq \mathcal{U}}} \prod_{\ell \in \mathcal{U}} \frac{e_{\ell}(\bX_i)}{1-e_{\ell}(\bX_i)}\\
    & = \sum_{\mathcal{J} \in \mathcal{I}_i^\beta }\prod_{j \in \mathcal{J}} \frac{-A_{ij}}{e_j(\bX_i)}
    \sum_{\mathcal{K} \in \mathcal{I}_i^\beta }  \prod_{k \in \mathcal{K}} \frac{-\pi(\bX_{ik})}{e_k(\bX_i)} 
    \sum_{\substack{\mathcal{U} \in \mathcal{I}_i^\beta \\(\mathcal{J} \cup \mathcal{K}) \subseteq \mathcal{U}}} \prod_{\ell \in \mathcal{U}} \frac{e_{\ell}(\bX_i)}{1-e_{\ell}(\bX_i)}\\
    & = \sum_{\mathcal{J} \in \mathcal{I}_i^\beta }\prod_{j \in \mathcal{J}} \frac{-A_{ij}}{e_j(\bX_i)} \sum_{\substack{\mathcal{U} \in \mathcal{I}_i^\beta \\\mathcal{J}  \subseteq \mathcal{U}}} \prod_{\ell \in \mathcal{U}} \frac{e_{\ell}(\bX_i)}{1-e_{\ell}(\bX_i)}
     \sum_{\mathcal{K}  \subseteq \mathcal{U}}  \prod_{k \in \mathcal{K}} \frac{-\pi(\bX_{ik})}{e_k(\bX_i)}\\
     & = \sum_{\mathcal{J} \in \mathcal{I}_i^\beta }\prod_{j \in \mathcal{J}} \frac{-A_{ij}}{e_j(\bX_i)} 
     \sum_{\substack{\mathcal{U} \in \mathcal{I}_i^\beta \\\mathcal{J}  \subseteq \mathcal{U}}} \prod_{\ell \in \mathcal{U}} \frac{e_{\ell}(\bX_i)}{1-e_{\ell}(\bX_i)}
      \prod_{k \in \mathcal{U}} \left(1- \frac{\pi(\bX_{ik})}{e_k(\bX_i)}\right)\\
    & = \sum_{\mathcal{U} \in \mathcal{I}_i^\beta }\prod_{\ell \in \mathcal{U}} \frac{e_{\ell}(\bX_i)}{1-e_{\ell}(\bX_i)} 
        \prod_{k \in \mathcal{U}} \left(1- \frac{\pi(\bX_{ik})}{e_k(\bX_i)}\right)
     \sum_{\mathcal{J}  \subseteq \mathcal{U}}  \prod_{j \in \mathcal{J}}
     \frac{-A_{ij}}{e_j(\bX_i)} \\
     & =  \sum_{\mathcal{U} \in \mathcal{I}_i^\beta }\prod_{\ell \in \mathcal{U}} \frac{e_{\ell}(\bX_i)}{1-e_{\ell}(\bX_i)} 
        \prod_{k \in \mathcal{U}} \left(1- \frac{\pi(\bX_{ik})}{e_k(\bX_i)}\right)
     \prod_{j \in \mathcal{U}}
    \left(1- \frac{A_{ij}}{e_j(\bX_i)} \right)\\
    & =  \sum_{\mathcal{U} \in \mathcal{I}_i^\beta }\prod_{\ell \in \mathcal{U}}
    \frac{(e_{\ell}(\bX_i)-A_{i\ell})(e_{\ell}(\bX_i)-\pi(\bX_{i\ell}))}{e_{\ell}(\bX_i)(1-e_{\ell}(\bX_i))}\\
    & =  \sum_{\mathcal{U} \in \mathcal{I}_i^\beta }\prod_{\ell \in \mathcal{U}}
    \frac{(e_{\ell}(\bX_i)-A_{i\ell})(e_{\ell}(\bX_i)-\pi(\bX_{i\ell}))}{e_{\ell}(\bX_i)(1-e_{\ell}(\bX_i))}\\
    & = \sum_{\mathcal{U} \in \mathcal{I}_i^\beta }\prod_{\ell \in \mathcal{U}} 
   \left( \frac{\mathds{1}\left\{
       A_{i\ell}=\pi(\bX_{i\ell})
     \right\}}{  e_\ell(
 \pi(\bX_{i\ell})
      \mid \bX_{i})}
  -1\right).
\end{align*}
Consequently, the explicit form of the final policy value estimator is given by:
\begin{equation*}
    \begin{aligned}
    \widehat{V}(\pi)
    &=\frac{1}{n}\sum_{i=1}^{n} \overline{Y}_{i}\sum_{\mathcal{U} \in \mathcal{I}_i^\beta }\prod_{\ell \in \mathcal{U}} 
   \left( \frac{\mathds{1}\left\{
       A_{i\ell}=\pi(\bX_{i\ell})
     \right\}}{  e_\ell(
 \pi(\bX_{i\ell})
      \mid \bX_{i})}
  -1\right).
    \end{aligned}
\end{equation*}

\end{document}